\documentclass[letterpaper,10pt,conference]{ieeeconf}

\IEEEoverridecommandlockouts
\usepackage{cite}
\usepackage{amsmath,amssymb,amsfonts,mathtools}

\usepackage{amsthm}
\usepackage{bm}
\usepackage{mathrsfs}
\let\labelindent\relax
\usepackage{enumitem}
\usepackage{xcolor}
\usepackage{graphicx}
\usepackage{booktabs}
\usepackage{tabularx}
\usepackage{array}
\usepackage{multirow}
\usepackage{float}
\usepackage[font=small]{caption}
\usepackage{subcaption}
\usepackage[hidelinks]{hyperref}

\newtheorem{theorem}{Theorem}[section]
\newtheorem{proposition}[theorem]{Proposition}

\newtheorem{corollary}[theorem]{Corollary}
\theoremstyle{definition}
\newtheorem{definition}[theorem]{Definition}
\newtheorem{remark}[theorem]{Remark}

\newtheorem{assumption}[theorem]{Assumption}

\newcommand{\R}{\mathbb{R}}
\newcommand{\E}{\mathbb{E}}
\newcommand{\Prob}{\mathbb{P}}
\newcommand{\cE}{\mathcal{E}}
\newcommand{\cW}{\mathcal{W}}
\newcommand{\cR}{\mathcal{R}}
\newcommand{\cB}{\mathcal{B}}
\newcommand{\cA}{\mathcal{A}}

\newcommand{\cT}{\mathcal{T}}
\newcommand{\PhiMacro}{\Phi}
\newcommand{\norm}[1]{\left\lVert #1 \right\rVert}

\newcommand{\JS}{\mathrm{JS}}
\newcommand{\KL}{\mathrm{KL}}
\newcommand{\Sh}{\mathrm{Sh}}
\newcommand{\Ren}{\mathrm{Ren}}
\newcommand{\Ts}{\mathrm{Ts}}

\begin{document}

\title{Cyber Dynamics I: Finite Macrostates for Behavioral Anomaly Detection in Network Telemetry}

\author{
Abdul Rahman$^{1}$, Eranga Bandara$^{2}$, Sachin Shetty$^{2}$\\
$^{1}$arahman@alum.howard.edu \\
$^{2}$Old Dominion University, USA \\
{\rm \{cmedawer, sshetty\}@odu.edu}
}

\maketitle

\begin{abstract}
Entropy-based methods have long been used for network anomaly detection, but most existing approaches treat entropy as a scalar statistic on narrow observables rather than as part of a broader behavioral state-space for cyber systems. We propose a finite-dimensional macrostate framework for network telemetry, instantiated over the Canonical Security Telemetry Substrate (CSTS), so that coarse-graining is performed over persistent entities, typed relations, and temporal state rather than isolated event records. The resulting macrostate captures activity, distributional disorder, structural organization, temporal volatility, persistence, and deviation from benign baselines. Rather than scoring only unusual states, we model window-to-window macrostate transitions and define regime structure, stability, and anomalous change. This supports discrimination between benign workload drift and adversarial reorganization. We evaluate the framework on benchmark network telemetry datasets and compare it against Shannon-, R\'enyi-, and Tsallis-style entropy baselines, as well as standard anomaly detectors. The proposed representation improves anomaly discrimination and supports more interpretable behavioral analysis of cyber telemetry.
\end{abstract}
\section{Introduction}

\subsection{Motivation}

Modern cyber defense operates in environments that are fundamentally dynamic, heterogeneous, and behavior-rich. Enterprise networks evolve continuously under ordinary operational forces such as office-hour transitions, scheduled backups, patch cycles, software deployment bursts, batch processing, cloud autoscaling, identity-driven access changes, and topology or service reconfiguration. At the same time, adversarial activity increasingly unfolds as multi-stage behavioral campaigns rather than isolated point anomalies, especially in advanced persistent threat (APT) settings, where persistence, lateral movement, staging, and coordinated re-use of infrastructure create temporal structure that is often more informative than any single event in isolation \cite{che2024systematic,buchta2024advanced}. These realities create a central tension for anomaly detection: the system must remain sensitive to genuinely malicious change while avoiding false alarms induced by normal operational variability, concept drift, and evolving workload conditions \cite{ahmed2024driftsurvey,hinder2024driftsurveyA}.

This tension exposes a limitation of pointwise or narrowly aggregated anomaly formulations. When telemetry is reduced to isolated events or small feature vectors, the distinction between benign variation and adversarial reorganization can become blurred. A burst of backup traffic, for example, may resemble exfiltration at the level of volume alone; a patch wave may resemble scanning or coordinated service churn at the level of connection counts; and cloud elasticity can alter endpoint and service distributions in ways that superficially mimic attack-induced instability. In such environments, meaningful detection requires not only sensitivity to unusual observations, but also a representation of \emph{system behavior over time}: what entities are active, how they are related, how interaction structure shifts, and whether the system remains within a stable operational regime or transitions into an anomalous one \cite{EkleEberle2024,JinEtAl2024GNN4TS,OkolieEtAl2025HeterogeneousCyber}.

A further complication is that modern cybersecurity analytics increasingly draw from heterogeneous telemetry sources rather than from packet or flow data alone. Network events, process trees, identity activity, file operations, DNS interactions, and higher-level behavioral signals often need to be interpreted jointly. This has motivated work on schema normalization and telemetry unification, including operational standards such as the Open Cybersecurity Schema Framework (OCSF) and broader observability ecosystems such as OpenTelemetry \cite{OCSF,opentelemetry,rongali2025opentelemetry}. In our view, however, a normalized schema alone is not enough. What is also needed is a canonical \emph{behavioral substrate} on which cyber AI models can be built consistently across tasks and data modalities. This is precisely the role played by the Canonical Security Telemetry Substrate (CSTS), which elevates persistent entities, typed relations, and temporal state into first-class objects and thereby provides a natural foundation for anomaly detection, graph learning, forecasting, and behavioral security analytics \cite{RahmanCSTS}. In this paper, we leverage CSTS as the representational substrate on which macrostate construction is defined.

\subsection{From Entropy to Cyber Dynamics}

Entropy has long provided a useful language for anomaly detection. The foundational information-theoretic viewpoint originates with Shannon's formalization of uncertainty and information \cite{Shannon1948}, while Jaynes' maximum-entropy perspective showed how constrained distributions can encode macroscopic knowledge in a principled way \cite{Jaynes1957}. These ideas inspired a substantial cybersecurity literature in which entropy is computed over traffic distributions such as source addresses, destination addresses, ports, protocols, flow sizes, or related categorical observables \cite{LeeXiang2001,LakhinaCrovellaDiot2005,GuMcCallum2005,BerezinskiJasiul2015}. In practice, Shannon entropy and its generalized relatives, including R\'enyi and Tsallis formulations, have proven useful for identifying scans, worms, denial-of-service behavior, botnet activity, and other deviations in traffic composition \cite{GuMcCallum2005,BerezinskiJasiul2015}. This literature is important and should not be minimized: it established that disorder-like statistics can reveal meaningful departures from nominal network behavior.

At the same time, most entropy-based cyber methods remain narrow in two senses. First, entropy is often applied to a small family of marginal histograms rather than to a richer representation of system organization. Second, entropy is typically used as a scalar feature or thresholding statistic, not as one coordinate within a higher-dimensional behavioral state-space. As a result, these methods frequently detect ``unusual distributions'' without explicitly representing whether the underlying system is becoming more coordinated, more concentrated, more volatile, more persistent, or more structurally reorganized. Yet many attacks are not simply episodes of high randomness or large volume. They are better understood as \emph{organized reconfiguration}: an attacker induces new communication motifs, establishes persistent command pathways, concentrates privilege flows, stages lateral transitions, or perturbs interaction structure in ways that may preserve some marginal distributions while fundamentally changing the system's behavioral organization \cite{che2024systematic,buchta2024advanced,EkleEberle2024}.

This observation motivates a shift from scalar entropy toward \emph{cyber dynamics}. By cyber dynamics, we mean a finite-dimensional, coarse-grained description of a cyber system in which raw micro-events within a time window are mapped to interpretable macro-observables and then studied as a temporal trajectory. In this view, entropy is retained as an important component, but it is not the endpoint. Instead, it becomes one member of a broader macrostate vector that may also encode activity, structural order, volatility, persistence, coupling, and deviation from nominal regimes. The central analytical object is no longer a single entropy score, but the evolution of a macrostate through time, together with the stability of benign regimes and the transition structure connecting them \cite{Shannon1948,Jaynes1957,EkleEberle2024,JinEtAl2024GNN4TS}. This shift is especially natural when telemetry is represented over CSTS, since persistent entities, typed relations, and temporal state provide exactly the ingredients needed for coarse-grained behavioral modeling \cite{RahmanCSTS}.

Our starting thesis is therefore the following: cyber systems admit coarse-grained finite macrostates whose temporal dynamics encode normal regimes, benign drift, perturbation, and adversarial change. The immediate goal of the present paper is not to claim a fully complete cyber thermodynamics, nor to assert that physics analogies should be imported uncritically. Rather, the goal is to establish a first rigorous step: a finite macrostate formalism for windowed network telemetry, instantiated over CSTS, that moves beyond entropy-only detection toward interpretable behavioral-state analysis. In this sense, entropy serves as inspiration, but the object of study is dynamics.

\subsection{Main Contributions}

The contributions of this paper are as follows:

\begin{itemize}[leftmargin=2em]
    \item We introduce a \emph{finite macrostate formalism} for windowed network telemetry in which raw micro-events are coarse-grained into interpretable macro-observables over activity, disorder, structure, volatility, persistence, and baseline deviation.
    
    \item We instantiate this framework over the \emph{Canonical Security Telemetry Substrate (CSTS)}, thereby grounding macrostate construction in persistent entities, typed relations, and temporal state rather than isolated event records \cite{RahmanCSTS}.
    
    \item We move beyond scalar entropy by defining a \emph{behavioral state-space} for cyber systems and by treating anomaly detection as both a \emph{state} problem and a \emph{transition} problem, allowing benign workload drift to be distinguished from adversarial reorganization.
    
    \item We provide a finite macrostate formalism for windowed network telemetry, instantiated over CSTS, that moves beyond entropy-only detection toward interpretable behavioral-state analysis.
    
    \item We evaluate the proposed representation on benchmark network telemetry datasets and compare it against Shannon-, R\'enyi-, and Tsallis-style entropy baselines, as well as standard anomaly detectors, showing improved anomaly discrimination and showing improved anomaly discrimination and more interpretable behavioral analysis of cyber telemetry. \cite{LeeXiang2001,LakhinaCrovellaDiot2005,GuMcCallum2005,BerezinskiJasiul2015}.
    
    \item More broadly, we position this paper as the first step in a larger \emph{cyber dynamics} program whose later stages extend naturally to heterogeneous telemetry, temporal graphs, and behavior-centric cyber AI models \cite{OkolieEtAl2025HeterogeneousCyber,EkleEberle2024,JinEtAl2024GNN4TS}.
\end{itemize}

\subsection{Paper Organization}

The remainder of the paper is organized as follows. Section~II reviews prior work on entropy-based anomaly detection, generalized entropy and divergence methods, temporal and graph anomaly analysis, and heterogeneous cybersecurity telemetry. Section~III introduces the formal framework of microstates, windowed ensembles, and finite macrostates, and explains how the construction is instantiated over CSTS. Section~IV defines the macro-observable families used in the paper, including activity, disorder, structural order, volatility, persistence, and deviation variables. Section~V develops the dynamical viewpoint through macrostate trajectories, nominal transition operators, regime structure, and stability notions. Section~VI presents the basic theory, including the proposition and theorem stack that motivates coarse-graining, transition-aware anomaly scoring, and multiscale analysis. Section~VII details the experimental design and benchmark setup, and Section~VIII presents the empirical results. Section~IX discusses implications, limitations, and the path toward broader cyber dynamics papers over heterogeneous telemetry and behavioral cyber AI.

\section{Related Work}

\subsection{Entropy-Based Network Anomaly Detection}

Entropy-based anomaly detection has a long and legitimate history in cybersecurity and network measurement. At the foundational level, Shannon's formulation of information and uncertainty established entropy as a natural measure of distributional dispersion \cite{Shannon1948}, while Jaynes' maximum-entropy program provided a principled way to model macroscopic regularity from limited constraints \cite{Jaynes1957}. These ideas entered network anomaly detection through information-theoretic formulations that treated traffic attributes such as source addresses, destination addresses, ports, protocols, and related categorical distributions as observables whose entropy profiles could reveal unusual activity \cite{LeeXiang2001}. This line of work was strengthened by distribution-based traffic mining approaches showing that anomalies often manifest not merely as total-volume changes but as shifts in the empirical distributions of key traffic features \cite{LakhinaCrovellaDiot2005}.

A particularly influential step was the use of maximum-entropy estimation to characterize nominal traffic behavior and detect departures from it \cite{GuMcCallum2005}. This formulation helped move the field beyond naive thresholding on raw counts by emphasizing constrained distributional modeling. Subsequent work further established entropy as a practical anomaly signal across diverse network settings, including scans, botnets, denial-of-service behavior, and other departures in flow composition \cite{BerezinskiJasiul2015}. The practical appeal of this literature is clear: entropy can be computed efficiently, it often has interpretable operational meaning, and it can detect irregularity even when signature-based systems fail.

At the same time, classical entropy-based network methods usually remain tied to a limited set of observables and are often deployed as feature-level heuristics or scalar alarms. Typical constructions compute entropy over one or several marginals---for example, over source IPs, destination IPs, ports, protocols, or packet-size bins---and then compare those values to historical baselines or threshold rules \cite{LeeXiang2001,LakhinaCrovellaDiot2005,GuMcCallum2005,BerezinskiJasiul2015}. This is effective for some forms of distributional change, but it leaves open a larger representational question: how should one describe the behavior of an evolving cyber system when the relevant signal lies not only in marginal disorder but also in structural organization, persistence, coordination, and temporal transition? That question motivates the present work.

\subsection{Generalized Entropy and Divergence Approaches}

The classical Shannon formulation has also been generalized in several directions. R\'enyi and Tsallis entropy families provide tunable alternatives that emphasize different parts of a distribution, making them attractive for network security settings where rare events, heavy concentration, or tail sensitivity matter \cite{BerezinskiJasiul2015}. Likewise, relative-entropy and divergence-based formulations such as Kullback--Leibler and Jensen--Shannon comparisons support the direct measurement of drift between current traffic distributions and nominal baselines \cite{GuMcCallum2005,BerezinskiJasiul2015}. From an operational perspective, these methods allow one to compare not just absolute entropy levels, but the \emph{distance} between present and expected behavior.

This generalized entropy and divergence literature is important because it shows that the anomaly signal is often richer than a single Shannon scalar. Different entropy families emphasize different distributional regimes, and divergence measures can capture changes even when absolute entropy itself is ambiguous. Nevertheless, these approaches usually preserve the same basic architecture as the classical entropy methods: they act on selected feature distributions and then feed the resulting quantities into thresholding, ranking, or downstream detectors. In other words, even when the entropy functional changes, the representational object is still usually a small set of feature-level summaries rather than a full behavioral state-space.

This distinction matters for the present paper. Our claim is not that generalized entropy is unimportant; on the contrary, it remains one of the most useful inspiration sources for cyber anomaly detection. Rather, the limitation is that Shannon, R\'enyi, Tsallis, KL, and JS style quantities are usually treated as \emph{end products} of analysis rather than as \emph{coordinates} inside a larger macrostate representation. A system may preserve several traffic marginals while nonetheless undergoing substantial reorganization in who communicates with whom, how persistent certain pathways become, or how interaction motifs evolve over time. Such behavior may not be captured fully by even a well-chosen family of entropy or divergence statistics alone \cite{GuMcCallum2005,BerezinskiJasiul2015}. This is one of the main reasons we treat entropy as an ingredient in a broader macro-dynamical construction rather than as the sole analytical object.

\subsection{Temporal, Graph, and Structural Anomaly Detection}

A second major body of work has shifted attention from static feature summaries toward temporal and structural models. This includes concept-drift and distribution-shift literature emphasizing that the meaning of ``normal'' changes over time, often for reasons unrelated to adversarial activity \cite{ahmed2024driftsurvey,hinder2024driftsurveyA}. In practical cyber settings, workloads evolve under changing user populations, deployment cycles, policy updates, and environmental shifts, so anomaly detectors must separate benign nonstationarity from genuine threat behavior. This literature is directly relevant to our framing because it reinforces the idea that the correct object of study is not merely an anomalous point, but an evolving process.

Related developments in dynamic and temporal graph anomaly detection further strengthen this shift. Recent surveys show that dynamic interaction graphs, temporal graph neural models, motif-based methods, and structural anomaly formulations are now central tools for representing evolving systems whose anomalies are partly relational and partly temporal \cite{EkleEberle2024,JinEtAl2024GNN4TS}. These methods are highly relevant to cybersecurity because many attacks are relational by nature: scans alter fan-out structure, lateral movement perturbs host-user-process connectivity, beaconing induces repeated low-volume communication motifs, and staged attacks create persistent or coordinated paths that may be invisible in simple scalar summaries. Graph-based and temporal methods therefore capture an important truth: anomaly often lives in \emph{organization} and \emph{transition}, not only in instantaneous magnitude.

Our work is aligned with this structural turn, but differs in emphasis. Much graph-based anomaly detection is representation-heavy and model-specific, often depending on embeddings, deep architectures, or graph-neural formulations tailored to particular data types \cite{EkleEberle2024,JinEtAl2024GNN4TS}. These are powerful, but they do not by themselves provide an interpretable finite macrostate formalism. The present paper instead seeks a middle layer: a compact, interpretable, coarse-grained state-space that can absorb entropy-like, structural, and temporal information while remaining usable across model families. In this sense, graph and temporal anomaly work support our structural-order lane, but our objective is to formulate that lane in a more explicit macro-dynamical language.

\subsection{Behavioral and Heterogeneous Cyber Telemetry}

A third relevant strand concerns behavioral cybersecurity and heterogeneous telemetry fusion. Modern attacks, especially advanced persistent threats, are inherently multi-stage and behavioral: they involve combinations of access establishment, reconnaissance, privilege movement, persistence, evasion, and command or exfiltration behavior unfolding over time \cite{che2024systematic,buchta2024advanced}. This has driven growing interest in systems that reason jointly over network, endpoint, identity, and higher-level behavioral signals rather than relying on packet or flow data in isolation. Recent work on anomaly detection in heterogeneous cybersecurity data reflects this need by emphasizing that realistic detection increasingly requires unified treatment of multiple modalities, schemas, and event types \cite{OkolieEtAl2025HeterogeneousCyber}.

Operational frameworks such as OCSF and observability standards such as OpenTelemetry represent important steps toward normalizing telemetry across sources \cite{OCSF,opentelemetry,rongali2025opentelemetry}. However, normalization alone does not yet provide a canonical behavioral substrate for cyber AI. That broader role is more naturally served by CSTS, which was designed as a canonical security telemetry substrate centered on persistent entities, typed relations, provenance, and temporal validity \cite{RahmanCSTS}. In the context of the present paper, this matters because a macrostate framework should not be built over isolated records if it is intended to generalize beyond a single dataset or detection task. By instantiating our windowed ensembles over CSTS, we retain a path from network-focused anomaly detection in this paper to future work on endpoint, identity, graph, and fully heterogeneous cyber dynamics without changing the underlying representational philosophy.

We emphasize, however, this paper remains network-focused. The present section therefore situates heterogeneous telemetry not as a fully developed empirical contribution of this paper, but as an important continuation path. The current work establishes the macrostate and dynamics formalism in the setting of network telemetry first, while deliberately choosing a substrate that supports extension to broader behavioral cyber AI.

\subsection{Gap and Positioning}

The literature reviewed above is substantial and valuable. Classical entropy methods established that network anomalies can often be detected through changes in distributional disorder \cite{LeeXiang2001,LakhinaCrovellaDiot2005,GuMcCallum2005,BerezinskiJasiul2015}. Generalized entropy and divergence methods broadened that toolkit, allowing richer sensitivity to concentration, rarity, and baseline drift \cite{GuMcCallum2005,BerezinskiJasiul2015}. Temporal, graph, and structural anomaly work showed that evolving interaction structure and time dependence are often essential \cite{EkleEberle2024,JinEtAl2024GNN4TS}. Behavioral and heterogeneous telemetry research made clear that modern cyber defense increasingly depends on multi-source, behavior-centered analysis \cite{che2024systematic,buchta2024advanced,OkolieEtAl2025HeterogeneousCyber}. Yet despite these advances, there is still no broadly adopted finite macrostate framework that simultaneously:
\begin{enumerate}[label=(\roman*),leftmargin=2em]
    \item coarse-grains temporal network telemetry into a compact set of interpretable macro-variables,
    \item treats anomaly as both a \emph{state} and a \emph{transition} phenomenon in macrostate space,
    \item integrates entropy, structure, persistence, and volatility within a single behavioral representation, and
    \item is naturally instantiated over a canonical telemetry substrate that supports extension to broader cyber AI tasks.
\end{enumerate}

This is the gap addressed by the present paper. Our novelty is therefore not the claim that entropy is new, nor that temporal or graph anomaly detection is new. Rather, the novelty lies in introducing a \emph{finite macro-dynamical formalism} for network telemetry, instantiated over CSTS, in which entropy becomes one coordinate of a broader behavioral state-space and anomalies are studied through macrostate regimes and regime transitions. This positions the paper as a bridge between entropy-based anomaly detection, structural cyber analytics, and a more general program of cyber dynamics.

\section{Framework: Microstates, Windows, and Macrostates}

This section introduces the formal language used throughout the paper and, more broadly, throughout the cyber dynamics program. Our goal is to define a representation layer that is general enough to support multiple telemetry types and model families, while remaining concrete enough for immediate use in network anomaly detection. The guiding principle is that cyber systems should not be modeled solely as sequences of isolated records, but rather as evolving collections of entities, relations, and state changes that admit meaningful coarse-graining over time. In this paper, we instantiate that perspective in the network setting and do so in a manner consistent with the Canonical Security Telemetry Substrate (CSTS), which provides a natural entity-relational and temporally indexed substrate for cyber AI \cite{RahmanCSTS}. The resulting formalism begins at the level of event microstates, aggregates them into windowed micro-ensembles, and then maps those ensembles into finite-dimensional macrostates whose trajectories can be studied dynamically.

\subsection{Event-Level Microstates}

We begin with the smallest observational unit used in the present framework. For this work such a unit is a typed network event, but the construction is intentionally general so that the same formal language can later be reused for endpoint telemetry, identity telemetry, provenance events, or other heterogeneous security signals. The central requirement is that the primitive object preserve time, participating entities, interaction type, and event attributes in a way that is compatible with canonical telemetry modeling.

\begin{definition}[Network microstate]
A network microstate is a typed event
\[
e=(\tau, \mathrm{src}, \mathrm{dst}, \mathrm{svc}, \mathrm{proto}, \mathrm{dir}, \mathrm{attrs}),
\]
where $\tau$ is a timestamp, $\mathrm{src}$ and $\mathrm{dst}$ are endpoint or role identifiers,
$\mathrm{svc}$ is a service or port class, $\mathrm{proto}$ is the protocol, $\mathrm{dir}$ denotes directionality,
and $\mathrm{attrs}$ collects event-specific numerical or categorical attributes such as bytes, packets, duration,
flags, or derived role labels.
\end{definition}

This definition is deliberately minimal. It captures the fact that a network event is not simply a number or a label, but a temporally localized interaction between typed entities under a protocol and service context. The tuple form also allows the modeler to choose an appropriate granularity for the source and destination identifiers. In some settings these may be raw IP addresses; in others they may be host identifiers, subnet roles, asset classes, internal/external designations, or other abstractions derived from context. Likewise, the service field may correspond to raw ports, port classes, application protocols, or semantically grouped service categories. The framework does not require that the finest-grained identifiers be retained, only that the chosen representation preserve enough information to define meaningful empirical distributions, interaction structures, and behavioral aggregates.

\begin{remark}
The representation above is intentionally generic and can be instantiated for NetFlow, Zeek logs, packet-derived summaries, or future endpoint and heterogeneous telemetry. In the broader CSTS perspective, the tuple $e$ should be interpreted as a time-indexed observation over persistent entities and typed relations rather than as an isolated flat record \cite{RahmanCSTS}. This allows later papers to replace network-only identifiers with richer entity and relation types without changing the surrounding formalism.
\end{remark}

A key design choice is that the framework does not bind itself to one vendor schema, one dataset format, or one telemetry source. Instead, the microstate is the abstract object that remains after normalization and canonicalization. This is important for two reasons. First, it makes the macrostate construction portable across benchmarks and deployment settings. Second, it aligns the paper with a broader view of cyber AI in which learning and inference should be built over canonical entity-relational substrates rather than over one-off data layouts. In the present paper, however, we remain deliberately conservative: the empirical evaluation focuses on network telemetry, and the richer generality of the microstate definition serves primarily to keep the framework reusable.

\subsection{Windowed Micro-Ensembles}

Individual events are often too fine-grained to support robust behavioral interpretation. Operationally meaningful cyber patterns usually emerge only after aggregation over short temporal horizons: a scan is visible as repeated contact behavior across destinations, a backup burst appears as sustained high-volume service activity, and beaconing emerges through repeated low-volume interactions with characteristic regularity. For this reason, the second level of the framework is not the isolated event but the \emph{windowed ensemble} of events collected over a bounded time interval.

\begin{definition}[Time window]
Fix a window length $\Delta>0$ and stride $\delta>0$. A window is an interval
\[
W_t=[t,t+\Delta).
\]
The window family is
\[
\cW=\{W_t : t = k\delta,\; k\in \mathbb{Z}_{\ge 0}\}.
\]
\end{definition}

The parameters $\Delta$ and $\delta$ control both the temporal resolution and the sampling fidelity of the analysis.  The window length determines how much local behavioral context is retained, while the stride determines how frequently that context is observed.  When $\delta<\Delta$, successive windows overlap.  This overlap is not merely a smoothing device.  It also improves boundary sensitivity: a behavioral shift may occur across the
boundary between two disjoint windows and therefore fail to appear as a coherent pattern in either window separately.  Overlapping windows increase the probability that such boundary-spanning behavior is captured inside at least one local ensemble, making short-lived transitions, onset behavior, and regime changes more visible in the induced macrostate trajectory.

By contrast, non-overlapping windows with $\delta=\Delta$ provide a simpler partition of the event stream and reduce computational redundancy, but they can be less sensitive to transient or boundary-localized changes.  For this reason, overlapping windows are especially natural in a transition-aware macrostate framework: they preserve a denser sampling of the path $\{X_t\}_{t\in T}$ and support more faithful estimation of state change, regime movement, and anomalous reorganization across time.

\begin{definition}[Windowed micro-ensemble]
The micro-ensemble induced by $W_t$ is
\[
\cE_t := \{e : \tau(e)\in W_t\}.
\]
\end{definition}

The ensemble notation emphasizes that the object of interest is not merely a set of timestamps, but a finite multiset of typed events whose aggregate behavior can be summarized statistically and structurally. Depending on the application, the ensemble may be viewed in several equivalent ways: as a multiset of records, as a temporal slice of a typed interaction graph, as an empirical sample from a latent behavioral regime, or as a local realization of a cyber process over a short horizon. All of these viewpoints are compatible with the present framework. What matters is that $\cE_t$ contains enough local evidence to support the construction of macro-observables describing load, disorder, structure, persistence, and deviation from nominal behavior.

This windowed-ensemble perspective already departs from traditional pointwise anomaly detection. Instead of asking whether a single event is unusual, we ask what kind of local behavioral configuration the system currently occupies. This distinction is central to the rest of the paper. Many anomalous phenomena are not encoded in isolated observations, but in the coordinated pattern formed by many micro-events over time. The ensemble is the first level at which that pattern becomes visible.

\subsection{Coarse-Graining Map}

The central formal move of the paper is a coarse-graining step that maps each micro-ensemble into a finite-dimensional macrostate. This is the point at which the language of entropy and dynamics becomes operational. Rather than reducing each window to a single disorder score, we extract a vector of interpretable macro-observables, each intended to capture a different aspect of system behavior. The macrostate is therefore a compressed but structured description of the local cyber configuration induced by the window.

\begin{definition}[Macrostate map]
A macrostate map is a function
\[
\PhiMacro : \{\text{finite event multisets}\} \to \R^d
\]
that assigns to each micro-ensemble $\cE_t$ a finite-dimensional macrostate
\[
X_t := \PhiMacro(\cE_t).
\]
\end{definition}

The purpose of $\PhiMacro$ is not simply dimensionality reduction in the machine-learning sense. It is a principled coarse-graining operator. It translates local collections of micro-events into a behavioral state vector that can be analyzed geometrically and dynamically. In particular, $\PhiMacro$ should preserve enough information to distinguish operational regimes, identify unusual transitions, and separate benign drift from adversarial reorganization. At the same time, it should be sufficiently compact and interpretable that the resulting macrostate space can be reasoned about directly.

\begin{definition}[Cyber macrostate]
The macrostate at time window $W_t$ is the vector
\[
X_t=(A_t,H_t,S_t,V_t,P_t,C_t,D_t)\in \R^d,
\]
where:
\begin{itemize}[leftmargin=2em]
    \item $A_t$ denotes activity/load variables,
    \item $H_t$ distributional disorder variables,
    \item $S_t$ structural order variables,
    \item $V_t$ temporal volatility variables,
    \item $P_t$ persistence/memory variables,
    \item $C_t$ coupling/influence variables,
    \item $D_t$ deviation-to-baseline variables.
\end{itemize}
\end{definition}

Each block of coordinates plays a distinct role. The activity variables summarize overall system occupancy and workload intensity. The disorder variables capture entropy-like dispersion across relevant empirical distributions. The structural variables encode organization in the interaction pattern, such as concentration, connectivity, or motif structure. The volatility variables describe short-horizon instability or burstiness. The persistence variables encode repeated or durable relational patterns. The coupling variables measure dependencies or influence between behavioral components. Finally, the deviation variables compare the current window to benign baselines or nominal regime summaries. Together, these coordinates define a finite behavioral state-space rich enough to support more than thresholding on one statistic, yet compact enough to remain interpretable.

The decomposition of $X_t$ should not be misunderstood as asserting that these seven blocks are uniquely determined or universally optimal. Rather, they define a canonical \emph{macro-basis} for cyber dynamics: a finite family of observable types that can be instantiated differently across domains while preserving the same formal architecture. For network telemetry, the empirical realization may involve counts, entropies, graph summaries, burstiness indicators, recurrence ratios, and baseline divergences. For endpoint or heterogeneous telemetry, the same block structure can be populated with process, identity, provenance, or multi-modal relational statistics. This portability is one of the main reasons to build the framework over CSTS, where entities, relations, and temporal validity are already explicit \cite{RahmanCSTS}.

\subsection{Behavioral Regimes and Anomalies}

Once macrostates have been defined, the next step is to view cyber behavior as motion through macrostate space. This leads naturally to the concepts of regimes, benign sets, and anomalous transitions. The language is intentionally dynamical: a system occupies local behavioral configurations, revisits some of them regularly, drifts among them under benign operational forces, and may leave them under attack or abnormal perturbation.

\begin{definition}[Behavioral regime]
A behavioral regime is a subset $\cR\subset \R^d$ of macrostate space representing a recurrent operational mode.
\end{definition}

A regime should be understood as a coarse operational state of the system rather than as a single point. Examples include normal office-hour activity, overnight quiet states, backup bursts, patching windows, scanning behavior, staging behavior, or other recurrent configurations that occupy coherent regions of macrostate space. Importantly, a regime may have internal variability. What makes it a regime is not that all windows are identical, but that they belong to a recurrent and behaviorally interpretable part of the state-space.

\begin{definition}[Benign regime set]
The benign regime set $\cB\subset \R^d$ is the union of recurrent macrostates corresponding to nominal operations.
\end{definition}

The benign set plays the role of a reference object for detection. It need not be a single cluster, nor even a convex region. Real systems often exhibit multiple nominal modes---for example, daytime traffic, nighttime batch processing, maintenance cycles, or other scheduled behaviors. A useful anomaly framework must therefore admit the possibility that normality is multimodal and dynamic.

\begin{definition}[State anomaly]
A window $W_t$ is state-anomalous if $X_t$ lies sufficiently far from $\cB$ under a chosen macrostate distance.
\end{definition}

This is the natural generalization of classical anomaly scoring to macrostate space. Rather than comparing a feature vector or entropy scalar to a threshold, we compare the current behavioral state to the set of benign regimes. In practical terms, this allows multiple forms of nominal variation to coexist while still identifying windows that fall outside the operational envelope.

\begin{definition}[Transition anomaly]
A transition $X_t\mapsto X_{t+1}$ is anomalous if it deviates sufficiently from the nominal transition law on benign windows.
\end{definition}

The notion of transition anomaly is one of the paper's central conceptual moves. A window may appear individually benign while the transition into it is highly unusual, or conversely a window may have moderate state deviation but arise through a strongly suspicious trajectory. This matters operationally because many cyber threats are better characterized by how the system moves than by where it happens to be at one instant. Benign operational drift, for example, may produce large but smooth state changes, while coordinated attack behavior may create abrupt or structurally inconsistent transitions even when some marginals remain near normal.

\begin{remark}
The present framework studies both \emph{state anomalies} and \emph{transition anomalies}. This dual viewpoint is one of the main conceptual differentiators of the paper. Classical entropy-based methods typically focus on unusual distributions at a given time, whereas the macro-dynamical view developed here treats anomaly as a property of both the current behavioral configuration and the path by which the system arrived there.
\end{remark}

Taken together, the definitions in this section establish the formal language needed for the remainder of the paper. We have defined event-level microstates, aggregated them into windowed ensembles, mapped those ensembles into finite macrostates, and introduced the regime-based notions through which benign behavior, state anomaly, and transition anomaly will be analyzed. The next section specifies the macro-observable families in more concrete detail, including the activity, disorder, structure, volatility, persistence, coupling, and deviation variables that populate the macrostate vector.

\section{Macro-Observable Design}

The macrostate map introduced in the previous section is only useful if its coordinates are chosen carefully. They must be rich enough to capture meaningful behavioral variation, structured enough to support interpretation, and practical enough to be instantiated on benchmark network telemetry. In this paper we therefore adopt a finite macro-basis organized into six main observable families together with a deviation layer relative to benign baselines. The design is motivated by the central claim of the paper: a cyber system should not be summarized by disorder alone, but by a joint description of activity, disorder, structure, volatility, persistence, coupling, and deviation. This section defines those observable families at the level needed for both formal analysis and experimental implementation.

A guiding principle throughout is that the coordinates should be interpretable under coarse-graining. They are not arbitrary engineered features. Rather, they are intended to function as \emph{macro-observables}: finite-dimensional summaries that preserve behaviorally meaningful properties of a windowed ensemble while remaining compatible with a CSTS-style entity-relational representation \cite{RahmanCSTS}. For this paper the empirical focus is on network telemetry, but the same macro-basis is designed to transfer to richer telemetry settings in later work.

\subsection{Activity and Occupancy Variables}

The first coordinate family captures the overall scale and occupancy of activity within a window. Let
\[
A_t=(a_1(t),\dots,a_{d_A}(t)).
\]
These variables describe how much activity is occurring and how widely that activity is distributed across participating entities, services, and communication roles. In the network setting, the most natural examples include the total number of events or flows, total bytes, total packets, unique source count, unique destination count, number of active services, number of active protocols, and role-based occupancy measures such as the number of internal hosts, external destinations, or active subnet classes observed within the window.

Activity variables are indispensable for at least two reasons. First, they encode the coarse scale of the system, which is often necessary to interpret all other observables. A change in entropy or motif concentration may have a very different meaning in a sparse overnight regime than in a high-load daytime regime. Second, they provide the basic context needed to distinguish low-volume structured behavior from high-volume but otherwise routine operational activity. In practice, this paper will instantiate this family using simple, robust quantities such as total flows, total bytes, unique source roles, unique destination roles, service occupancy, active subnet counts, and selected duration summaries such as median or upper quantile flow duration.

These variables are not themselves anomaly detectors. A backup window, a denial-of-service burst, and a large deployment wave may all register as high activity. Their value lies instead in anchoring the macrostate: they define how large the local system configuration is before one asks whether it is disordered, concentrated, volatile, or behaviorally unusual.

\subsection{Distributional Disorder Variables}

The second observable family captures disorder-like properties of the empirical distributions induced by a windowed ensemble. This is the natural place where the classical entropy literature enters the macrostate. Let $p_t^{(m)}$ denote an empirical categorical distribution induced by a chosen observable family $m$, for example source roles, destination roles, service classes, protocol classes, flow-size bins, duration bins, or other discrete partitions of the event ensemble. Define
\[
H_t^{\Sh,m} = -\sum_i p_t^{(m)}(i)\log p_t^{(m)}(i).
\]
For $\alpha>0$, $\alpha\neq 1$, define the R\'enyi entropy
\[
H_t^{\Ren,m}(\alpha)=\frac{1}{1-\alpha}\log\left(\sum_i p_t^{(m)}(i)^\alpha\right),
\]
and for $q>0$, $q\neq 1$, define the Tsallis entropy
\[
H_t^{\Ts,m}(q)=\frac{1-\sum_i p_t^{(m)}(i)^q}{q-1}.
\]

This family is directly motivated by the classical and generalized entropy literature in network anomaly detection \cite{Shannon1948,Jaynes1957,LeeXiang2001,LakhinaCrovellaDiot2005,GuMcCallum2005,BerezinskiJasiul2015}. The essential role of disorder variables is to quantify how concentrated or dispersed activity is across relevant categories. A scan may flatten a destination distribution while increasing fan-out concentration in another sense; beaconing may preserve low volume but induce repeated concentration on a service or destination class; exfiltration may alter byte-bin or service-use distributions; and ordinary workload shifts may redistribute traffic over roles, protocols, or subnets without implying malicious intent.

In the macrostate perspective, however, these entropy quantities are not treated as endpoints. They are coordinates. We therefore write
\[
H_t=\big(H_t^{(1)},\dots,H_t^{(d_H)}\big),
\]
where each component may be a Shannon, R\'enyi, or Tsallis quantity evaluated on a specific observable family. In practice, it is neither necessary nor desirable to include every possible entropy variant. This paper uses Shannon entropy as the primary anchor and includes only a small number of generalized-entropy coordinates where they add meaningful sensitivity to concentration or tail structure. The broader point is that disorder enters the macrostate as one family among several, not as the sole summary of behavior.

\subsection{Structural Order Variables}

Entropy alone cannot capture how a system is organized. Two windows may have identical source, destination, or service distributions while differing radically in communication topology, concentration of influence, or coordination pattern. To capture this aspect of behavior we introduce structural-order variables. For each window $W_t$, let $G_t$ be the typed interaction graph induced by $\cE_t$, where vertices represent participating entities or role classes and edges encode observed interactions, optionally typed by service, protocol, or direction.

Possible coordinates of $S_t$ include:
\begin{itemize}[leftmargin=2em]
    \item degree concentration,
    \item graph density,
    \item motif counts,
    \item edge-type diversity,
    \item bipartite service-access concentration,
    \item role-transition concentration.
\end{itemize}

These quantities are motivated by the dynamic-graph and structural anomaly literature, which has shown that evolving interaction structure often contains anomaly signals not visible in marginal feature statistics \cite{EkleEberle2024,JinEtAl2024GNN4TS}. In the present paper, structural order refers not to order in a thermodynamic sense, but to coarse relational organization: how concentrated the communication graph is, how many repeated local patterns it contains, whether activity is spread diffusely or focused through a small number of hubs, and whether interactions remain behaviorally balanced across roles and service-access channels.

For network telemetry, useful structural summaries include measures of source or destination degree concentration, density of the induced host-service graph, ratios of one-to-many or many-to-one motifs, concentration of access into a small number of service classes, and role-transition concentration between internal, external, client, server, and other role partitions. These quantities are especially important for the paper's claim that attacks are often better described as \emph{organized reconfiguration} rather than as mere increases in randomness. A window may exhibit moderate entropy change while simultaneously showing strong centralization, repeated motif formation, or other structural regularities indicative of malicious coordination.

\subsection{Temporal Volatility Variables}

A macrostate should also reflect how behavior fluctuates locally in time. Static summaries of an entire window can miss short-lived bursts, irregular interarrival behavior, oscillations, or abrupt micro-transitions that are behaviorally important. We therefore include a temporal-volatility family,
\[
V_t=(v_1(t),\dots,v_{d_V}(t)),
\]
whose coordinates capture local instability or variation either within the window itself or across a short recent sequence of windows.

Possible coordinates of $V_t$ include:
\begin{itemize}[leftmargin=2em]
    \item sub-window burstiness,
    \item within-window interarrival irregularity,
    \item short-lag variation of selected counts,
    \item transition entropy on discretized event subtypes,
    \item change-point proxy statistics.
\end{itemize}

Operationally, volatility variables are useful because benign regimes are rarely static, yet their fluctuations often have characteristic forms. Daytime traffic may be high but relatively smooth; patching may produce sharp but expected transient bursts; beaconing may manifest as low-volume but highly regular repetition; and active attack phases may create abrupt directional changes or oscillatory patterns. The concept-drift and nonstationarity literature underscores that one must distinguish normal temporal adaptation from genuinely suspicious instability \cite{ahmed2024driftsurvey,hinder2024driftsurveyA}. Volatility coordinates support precisely that separation by enriching the macrostate with short-horizon dynamical information.

In practice, these variables may be computed by subdividing a window into shorter bins and summarizing the fluctuations of counts, byte totals, service usage, or edge activations across those bins. They may also incorporate short-lag variations across neighboring windows when a smoother trajectory-based description is desired. The exact implementation is discussed in the experimental section, but the conceptual role is already clear: volatility variables measure how a regime moves internally, not just where it sits.

\subsection{Persistence and Memory Variables}

The next family captures the extent to which behavior repeats or persists. Many cyber phenomena are not characterized only by what appears within a single window, but by whether specific entities, paths, services, or interaction motifs remain active across time. Persistence is therefore a crucial bridge from local anomaly scoring to regime analysis.

Possible coordinates of $P_t$ include:
\begin{itemize}[leftmargin=2em]
    \item repeated edge persistence,
    \item repeated service-path persistence,
    \item dwell or recurrence statistics,
    \item autocorrelation-derived summaries across recent windows.
\end{itemize}

Persistence variables encode whether relationships that appear in one window tend to recur in subsequent windows, whether service-access paths remain stable over time, and whether a regime exhibits memory. In the network setting, examples include the fraction of source-destination pairs repeated from recent windows, recurrence of source-service or destination-service paths, repeated low-volume communications suggestive of beaconing, and temporal stability of selected role-transition channels. These observables are important because benign infrastructure often shows stable recurring pathways, but adversarial persistence may also create durable coordination channels or repeated staging behavior. The distinction lies not in persistence alone, but in how persistence interacts with the other macro-observables.

This paper uses this family primarily to enrich the macrostate and to prepare the way for the more explicitly dynamical analysis of stability and recovery developed in later papers. Even at this first stage, however, persistence variables help distinguish transient bursts from sustained reorganization and support the claim that behavior should be understood through evolving regimes rather than isolated snapshots.

\subsection{Deviation-to-Baseline Variables}

The final family provides direct measures of deviation from nominal behavior. While the other macro-observables describe the internal structure of a windowed ensemble, deviation variables explicitly compare the present window to benign reference behavior. Let $p_t$ denote a current empirical distribution and let $p_{\mathrm{benign}}$ denote the corresponding benign baseline distribution. Then representative coordinates include
\[
D_t^{\JS} = \JS(p_t \,\|\, p_{\mathrm{benign}}),
\qquad
D_t^{\KL} = \KL(p_t \,\|\, p_{\mathrm{benign}}),
\]
together with geometric distances to the benign macrostate region.

Deviation variables are operationally important because anomaly detection is ultimately comparative. A macrostate may have high entropy, low entropy, high density, or strong persistence, but none of these is anomalous in itself unless interpreted relative to the system's nominal regimes. Divergence-based quantities therefore provide a direct bridge between the macrostate framework and deployable scoring methods. They also connect naturally to prior maximum-entropy and relative-entropy approaches in network anomaly detection \cite{GuMcCallum2005,BerezinskiJasiul2015}. In the present framework, however, they are generalized from feature-level distribution comparisons to regime-aware macrostate comparisons. This means that deviation may be defined not only against a benign histogram, but against a benign manifold, cluster, or region in the finite-dimensional macrostate space introduced earlier.

In practice, this paper will instantiate deviation variables using a modest set of divergence scores and geometric distances built from benign training windows. These coordinates provide the most direct anomaly signal in the full macrostate, but their effectiveness depends on the broader context supplied by activity, structure, volatility, and persistence. A deviation score may indicate that a window is unusual; the other macro-observables help explain \emph{why} it is unusual.

\begin{remark}
This paper instantiates a practical subset of the macro-observable families above rather than every possible coordinate. The purpose of this section is to define a reusable finite macro-basis for the cyber dynamics program. The empirical study therefore selects representative and robust coordinates from each family while preserving the overall architecture of activity, disorder, structure, volatility, persistence, and deviation.
\end{remark}

Taken together, these observable families define the macrostate as a finite behavioral coordinate system rather than a bag of unrelated features. Activity measures scale and occupancy, disorder captures dispersion, structure captures organization, volatility captures local instability, persistence captures memory, and deviation captures distance from benign behavior. This decomposition is the core design choice of the paper. The next section builds on it by defining dynamics in macrostate space, including trajectories, state and transition anomaly scores, and regime-level notions of stability and recovery.
\section{Dynamics on Macrostate Space}

The preceding sections defined the macrostate as a finite-dimensional behavioral summary of a windowed cyber ensemble. That construction is necessary, but it is not sufficient for the main thesis of the paper. If the analysis stopped at feature extraction, the framework would remain only a richer anomaly-feature set. The present section therefore introduces the dynamical viewpoint that distinguishes this work from entropy-only and feature-only approaches. The central idea is that cyber behavior should be studied not only through the properties of individual windows, but through the evolution of macrostates over time: how regimes persist, how benign systems drift, how perturbations unfold, and how adversarial activity manifests as abnormal reorganization in state-space.

This perspective is motivated both by the cyber setting itself and by adjacent work on drift, temporal anomaly detection, and dynamic graph analysis. In operational environments, normal behavior is not static; it evolves across day-night cycles, maintenance periods, software updates, and user-driven workload transitions \cite{ahmed2024driftsurvey,hinder2024driftsurveyA}. Likewise, graph and temporal anomaly work has shown that abnormality often appears in evolving structure and sequence, not merely in isolated snapshots \cite{EkleEberle2024,JinEtAl2024GNN4TS}. Our framework absorbs these lessons by treating the macrostate as the basic object of dynamics and by defining anomalies both as unusual positions in macrostate space and as unusual transitions between successive macrostates.

\subsection{Macrostate Trajectories}

Let $\cT$ denote the ordered index set induced by the window family. The macrostate trajectory is
\[
\{X_t\}_{t\in \cT}.
\]
This trajectory is the primary dynamical object in the paper. Each $X_t$ summarizes the local behavior of the system in a single window, while the ordered family $\{X_t\}$ records how that behavior evolves over time. In this way, the framework moves from static summary to state-space process.

The trajectory viewpoint is important for several reasons. First, many operationally meaningful patterns are not identifiable from one window alone. A benign burst and an attack burst may yield comparable activity levels in one time slice, yet differ sharply in how they arise and how they persist. Second, trajectory analysis naturally supports regime discovery: one can ask whether the system revisits certain regions of state-space recurrently, whether it tends to remain there for long durations, and whether departures from those regions exhibit interpretable transition patterns. Third, trajectory structure allows one to distinguish \emph{drift} from \emph{disruption}. A system undergoing ordinary workload change may move smoothly through macrostate space, whereas a system under adversarial pressure may exhibit abrupt displacement, atypical oscillation, or inconsistent transition behavior.

From a practical standpoint, the trajectory may be studied at one or more temporal resolutions depending on the chosen window length and stride. Different anomaly classes may be more visible at different scales, and this is one reason the experimental design later considers multiscale windowing. Conceptually, however, the main point is independent of scale: the macrostate is not an end in itself, but a state variable whose temporal evolution is the relevant object of analysis.

\subsection{Nominal Transition Operator}

To reason about anomalous transitions, one first needs a notion of how benign macrostates are expected to evolve. We encode this through a nominal transition operator.

\begin{definition}[Nominal transition operator]
A nominal transition operator is a map
\[
T:\R^d\to \R^d
\]
or conditional expectation model
\[
T(X_t)\approx \E[X_{t+1}\mid X_t,\; X_t\in \cB].
\]
\end{definition}

The role of $T$ is to summarize nominal short-horizon evolution in macrostate space. Intuitively, if the system is currently in a benign region and operating under ordinary perturbations, then $T(X_t)$ describes where one expects the next macrostate to lie on average. The operator may be instantiated in several ways depending on model complexity and data availability. In the simplest case, $T$ may be estimated by a linear model, vector autoregressive approximation, or nearest-neighbor conditional expectation on benign windows. More flexible nonlinear or regime-conditional models are also possible. The formal definition intentionally leaves this open, because the point of the framework is not to privilege one predictor architecture, but to define a canonical transition object over macrostate space.

This definition also clarifies one of the paper's conceptual claims. A system may occupy a macrostate that is individually unsurprising, yet arrive there through a highly implausible transition. Conversely, a macrostate may appear moderately unusual in isolation but arise via smooth benign drift from a nearby nominal region. By introducing $T$, we separate the question ``where is the system?'' from the question ``how did the system move?'' In doing so, we create a formal basis for transition-aware anomaly detection.

\subsection{State and Transition Scores}

Anomaly scoring in the present framework is therefore naturally divided into state-based and transition-based components.

\begin{definition}[State anomaly score]
A state anomaly score is any functional of the form
\[
\mathcal{S}_t^{\mathrm{state}} = d(X_t,\cB),
\]
where $d$ is a suitable distance to the benign regime set.
\end{definition}

The state score measures how far the current macrostate lies from the benign operational envelope. The distance $d$ may be Euclidean after normalization, Mahalanobis-like relative to benign covariance, a manifold or cluster distance, or another geometry appropriate to the empirical benign set. The key point is that normality is represented as a set $\cB$ of regimes rather than a single reference point. This allows multiple benign modes to coexist, which is crucial in realistic cyber environments with daily cycles, workload classes, or maintenance regimes.

\begin{definition}[Transition anomaly score]
A transition anomaly score is any functional of the form
\[
\mathcal{S}_t^{\mathrm{trans}} = \norm{X_{t+1}-T(X_t)}.
\]
\end{definition}

The transition score measures departure from expected benign motion. It is thus sensitive to unusual short-horizon changes even when the current or next macrostate lies near the benign region. This is particularly useful when adversarial activity begins by perturbing the system's dynamics before displacing it far from normal operating modes. For example, the initiation of coordinated scanning, credential-use chaining, or staged beaconing may first appear as an implausible transition rather than as a fully developed outlying state. The transition score therefore operationalizes the intuition that anomaly can live in the \emph{path} and not only in the \emph{position}.

These two scores may be used separately or combined. State scores are naturally suited to detecting windows that lie outside the benign envelope, whereas transition scores are suited to detecting abnormal motion between windows. Later in the paper, the empirical evaluation will compare state-only, transition-only, and fused formulations. The theoretical value of distinguishing them is already visible here: the framework makes explicit that anomaly detection in cyber systems is both geometric and dynamical.

\subsection{Regime Transition Structure}

A further advantage of the macrostate formulation is that it supports finite regime-level descriptions of behavior. Suppose $\cB\cup \cA$ is partitioned into regimes $\cR_1,\dots,\cR_K$, either by clustering, density-based partitioning, model-based assignment, or another suitable procedure. Define the empirical regime transition matrix
\[
P_{ij} := \Prob(X_{t+1}\in \cR_j \mid X_t\in \cR_i).
\]

This matrix provides a coarse dynamical summary of the system at the regime level. It records how often the system remains in a regime, drifts into nearby regimes, or makes rare jumps into atypical ones. In benign settings, one expects recurrent operational modes to exhibit characteristic transition patterns: some regimes may be highly persistent, others may serve as transient bridges between day and night behavior, and still others may occur only during scheduled maintenance. Under attack, by contrast, one may observe low-probability transitions, uncharacteristic dwell patterns, or motion into regimes associated with coordinated reorganization.

The regime-transition viewpoint is valuable because it compresses trajectory behavior into an interpretable finite object. Instead of examining every window individually, one can study the flow of the system through a small number of macro-behavioral states. This is consistent with the structural and temporal anomaly literature, which increasingly emphasizes evolving organization over pointwise irregularity \cite{EkleEberle2024,JinEtAl2024GNN4TS}. In the present framework, however, the regime matrix is not merely descriptive; it also supports anomaly reasoning. Rare or statistically implausible regime transitions can be treated as indicators of abnormal dynamics even when raw feature changes are modest.

This regime-level description is also one of the main reasons the macrostate framework is reusable across later papers. Once a canonical state-space is defined over CSTS-compatible telemetry, regime discovery and regime-transition estimation can be reused for network, endpoint, identity, provenance, or fully heterogeneous behavioral systems without changing the surrounding logic \cite{RahmanCSTS}.

\subsection{Stability and Recovery}

The final dynamical notions introduced in this paper are stability and recovery. Even though this paper uses them only lightly, they are included now because they are central to the long-term cyber dynamics program and provide the correct language for distinguishing nominal resilience from persistent disruption.

\begin{definition}[Macrostate stability]
A regime $\cR$ is $\varepsilon$-stable over horizon $h$ if windows initialized in $\cR$
remain within an $\varepsilon$-tube of $\cR$ with high probability over $h$ future steps
under nominal perturbations.
\end{definition}

This notion formalizes the intuitive idea that benign regimes are not static points but dynamically persistent regions. A stable regime may fluctuate internally, but those fluctuations remain confined to a neighborhood of the regime under ordinary operational forces. In practice, stability can be estimated by examining dwell times, local variance, transition concentration, or empirical return properties. The main conceptual value of the definition is that it separates normal variability from genuine escape behavior. A system can be volatile and yet stable if it remains within its operational tube; conversely, a low-variance system may be unstable if small perturbations frequently eject it into abnormal regions.

\begin{definition}[Recovery time]
For an anomalous perturbation at time $t$, the recovery time is
\[
\tau_{\mathrm{rec}}(t) := \inf\{k\ge 1 : d(X_{t+k},\cB) < \varepsilon\}.
\]
\end{definition}

Recovery time measures how long it takes the system to return to the benign envelope after a perturbation. This notion is useful for at least two reasons. First, it distinguishes transient excursions from persistent anomalous occupation. A benign but abrupt operational event may temporarily displace the macrostate and then quickly relax, whereas an adversarial intrusion may induce sustained deviation or repeated re-entry into anomalous regions. Second, recovery time connects anomaly detection to resilience analysis. A system with short recovery under benign perturbations but long recovery under attacks may exhibit a meaningful separation between ordinary operational stress and malicious disruption.

Although the empirical emphasis of this paper remains anomaly discrimination rather than full resilience analysis, introducing recovery now helps establish continuity with later papers, where regime persistence, metastability, and post-perturbation dynamics will play a larger role. More broadly, the definitions of stability and recovery reinforce the central message of the paper: cyber behavior should be studied as an evolving dynamical process in a finite macrostate space, not merely as a sequence of isolated alerts or scalar entropy values.

Taken together, the constructions in this section complete the transition from feature design to dynamics. We have defined macrostate trajectories, nominal transition operators, state and transition anomaly scores, regime-transition structure, and basic notions of stability and recovery. These objects provide the formal machinery needed to treat anomaly as both a geometric and dynamical phenomenon. The next section builds on this machinery by presenting the basic theory that justifies coarse-graining, formalizes the limitations of entropy-only descriptions, and motivates multiscale, transition-aware analysis.

\section{Basic Theory}

This section records the basic theoretical statements that justify the macrostate framework introduced above. The aim is not to impose an artificially grand theory, but to formalize several simple and useful points: the macrostate map is well-defined under natural conditions; coarse-grained observables are invariant under irrelevant permutations of micro-events; entropy-only descriptions are incomplete in the presence of structural reorganization; transition-based anomaly scoring can be justified under a concentration assumption on benign dynamics; and multiscale constructions are necessary because some anomaly patterns are only visible at certain temporal resolutions. These statements are deliberately modest. Their role is to support the conceptual architecture of the paper and to clarify what the empirical study is, and is not, meant to demonstrate.

\subsection{Well-Posedness of the Macrostate Construction}

We begin with the most basic question: does the macrostate construction make mathematical sense as a map from finite event ensembles to a finite-dimensional state vector? Under the definitions already introduced, the answer is immediate, but it is useful to record the statement explicitly because later arguments rely on the existence of a well-defined coarse-graining operation.

\begin{proposition}[Well-defined macrostate map]
Suppose each coordinate of $\PhiMacro$ is a measurable finite-valued functional on finite event multisets.
Then for every finite windowed micro-ensemble $\cE_t$, the macrostate
\[
X_t=\PhiMacro(\cE_t)
\]
is well-defined.
\end{proposition}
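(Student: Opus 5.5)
The plan is to reduce the statement to two facts: the micro-ensemble $\cE_t$ is a legitimate element of the domain of $\PhiMacro$, and each coordinate of $\PhiMacro$ returns a single finite real number on that element. Well-definedness of $X_t$ then follows by assembling the coordinates into a vector in $\R^d$.

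\textbf{Step 1: $\cE_t$ lies in the domain.} By the definition of the windowed micro-ensemble, $\cE_t=\{e:\tau(e)\in W_t\}$ is a multiset of typed events determined uniquely by the event stream and the interval $W_t=[t,t+\Delta)$. The statement takes $\cE_t$ to be finite. If one wants a justification rather than a hypothesis, it is the standing modelling assumption that only finitely many events have timestamps in any bounded interval. Because $\cE_t$ is a multiset, it is independent of any ordering of its events. Hence $\cE_t$ is one unambiguous element of the class of finite event multisets on which $\PhiMacro$ is defined.

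\textbf{Step 2: each coordinate is a finite real.} Write $\PhiMacro=(\phi_1,\dots,\phi_d)$. By hypothesis each $\phi_j$ is a function on finite event multisets taking finite values. So $\phi_j(\cE_t)$ is a single element of $\R$, and the tuple $(\phi_1(\cE_t),\dots,\phi_d(\cE_t))$ is a well-defined point of $\R^d$. Measurability is not needed for pointwise existence. I would remark that it serves a different purpose: it makes $X_t$ a random vector whenever the event stream is random, which later probabilistic statements such as $\E[X_{t+1}\mid X_t]$ and $P_{ij}$ require.

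\textbf{Step 3 (main obstacle): degenerate ensembles.} The finiteness hypothesis has to be reconciled with the concrete coordinates of Section~IV. On an empty or very small ensemble, several of them are not automatically finite:
\begin{itemize}
\item entropies $H_t^{\Sh,m}$ need a normalization $p_t^{(m)}=n_i/|\cE_t|$, which is undefined when $|\cE_t|=0$;
\item $D_t^{\KL}$ is infinite when $p_t$ charges a category of zero mass under $p_{\mathrm{benign}}$;
\item graph density and motif ratios have zero denominators.
\end{itemize}
I would handle this by stating conventions that make the hypothesis hold:
\begin{itemize}
\item use $0\log 0=0$;
\item assign a fixed value (e.g.\ $0$) to distributional coordinates when $|\cE_t|=0$;
\item smooth the baseline, e.g.\ with additive smoothing so that $p_{\mathrm{benign}}$ has full support;
\item define ratios to be $0$ when the denominator vanishes.
\end{itemize}
With these conventions every coordinate is finite on every finite multiset.

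Measurability then holds because the domain is a countable union over cardinalities $n$ of quotients of $n$-fold event spaces, and each coordinate is built from finitely many counts, sums, logarithms and indicator branches. This closes the argument.
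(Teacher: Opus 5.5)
Your proposal is correct and follows the paper's argument: Steps 1--2 are exactly what the paper does. It combines the finiteness of $\cE_t$ with the hypothesis that each coordinate is a finite-valued functional on finite multisets, then collects the coordinates into a point of $\R^d$. Step 3 and the measurability remark go beyond what the proposition needs, since finiteness is assumed there, and they do not appear in the paper; they usefully check that the concrete Section~IV coordinates can satisfy the hypothesis, but note that the persistence coordinates (e.g.\ the repeated-edge ratio relative to the prior window) also depend on $\cE_{t-1}$, so strictly they are functionals of a pair of windows rather than of $\cE_t$ alone.
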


\begin{proof}
By assumption, each coordinate of $\PhiMacro$ is defined on the class of finite event multisets and returns a finite value. Since each windowed micro-ensemble $\cE_t$ is finite by construction, every coordinate of $\PhiMacro(\cE_t)$ exists and is finite. Collecting these coordinates yields a finite-dimensional vector in $\R^d$. Hence $X_t=\PhiMacro(\cE_t)$ is well-defined.
\end{proof}

The practical meaning of this proposition is simple: once a macrostate family has been specified in terms of measurable finite-valued summaries of the event ensemble, the coarse-graining map is mathematically legitimate. This matters because the macrostate is not introduced as an informal metaphor. It is a bona fide derived object associated to each window.

The next proposition records a second basic property. Many coarse-grained observables are intended to forget irrelevant ordering details and retain only aggregate structure. If a coordinate depends only on empirical distributions, graph summaries, or temporal summary statistics, then any permutation of events that preserves those summaries should leave that coordinate unchanged.

\begin{proposition}[Permutation invariance of coarse-grained observables]
Suppose a coordinate of $\PhiMacro$ depends only on empirical distributions,
typed graph summaries, or temporal summary statistics induced by $\cE_t$.
Then that coordinate is invariant under permutations of events preserving those summaries.
\end{proposition}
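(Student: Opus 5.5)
The argument is a factorization: a coordinate that "depends only on" certain summaries is one that can be computed from those summaries alone. Concretely, let $\sigma$ denote the summary map that sends a finite event multiset $\cE$ to the relevant tuple of summaries. This tuple consists of its empirical categorical distributions $p^{(m)}$, the typed interaction graph summary of $G$, and the chosen temporal statistics. The hypothesis that a coordinate $\phi$ of $\PhiMacro$ depends only on these summaries is then read precisely as the existence of a function $g$ with $\phi = g\circ\sigma$ on finite event multisets. The first step of the proof is to state this formalization explicitly. The measurability of $g$ is inherited from the setting of the well-definedness proposition above, though it plays no role in the invariance argument itself.

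Next we make precise what "permutations of events preserving those summaries" means. A permutation $\pi$ acts on $\cE_t$ by relabeling or reordering its events, or more generally by a bijection of the multiset onto a rearranged multiset $\pi(\cE_t)$. Such a $\pi$ is summary-preserving exactly when $\sigma(\pi(\cE_t)) = \sigma(\cE_t)$. Two instances should be noted for the reader.

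\begin{itemize}[leftmargin=2em]
\item A pure reordering of the multiset preserves empirical distributions and graph summaries automatically, since multiset counts are order-free.
\item A permutation that exchanges timestamps preserves temporal statistics only when those statistics are themselves symmetric in the relevant way, which is why that condition is carried as a hypothesis.
\end{itemize}

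With these definitions the conclusion is a one-line computation:
\[
\phi(\pi(\cE_t)) = g(\sigma(\pi(\cE_t))) = g(\sigma(\cE_t)) = \phi(\cE_t).
\]
The same computation applies coordinatewise. It follows that the full subvector of $X_t$ formed by all such coordinates is invariant as well.

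The only genuine obstacle is the first step, namely turning the informal phrase "depends only on" into the factorization $\phi = g\circ\sigma$. Once that reading is fixed, the result holds essentially by definition. To make the reading concrete, we check it against the three families actually used in the paper.

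\begin{itemize}[leftmargin=2em]
\item Shannon, R\'enyi and Tsallis coordinates are explicit functions of $p_t^{(m)}$.
\item JS and KL deviations are functions of $p_t$ together with the fixed baseline $p_{\mathrm{benign}}$.
\item Degree concentration, density and motif counts are graph invariants of $G_t$, and are therefore unchanged under any relabeling that yields an isomorphic typed graph.
\end{itemize}

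This check also clarifies the scope of the result. Coordinates that depend on event order beyond the chosen summaries, such as raw sequence positions, fall outside the hypothesis, and the proposition makes no claim about them.
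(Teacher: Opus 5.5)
Your proof is correct and is the same argument as the paper's. The paper observes that a coordinate depending only on the relevant summaries is evaluated on identical inputs under any summary-preserving permutation; that is exactly your computation $\phi(\pi(\cE_t)) = g(\sigma(\pi(\cE_t))) = g(\sigma(\cE_t)) = \phi(\cE_t)$ once you make the factorization $\phi = g\circ\sigma$ explicit, and your aside that measurability of $g$ is ``inherited'' is not automatic in general, but as you say it plays no role.
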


\begin{proof}
Let $f$ be a coordinate of $\PhiMacro$ satisfying the stated dependence condition. If two orderings of the events in $\cE_t$ induce the same empirical distributions, the same typed graph summaries, and the same temporal summary statistics relevant to $f$, then $f$ is evaluated on identical inputs under both orderings. Therefore the resulting value is unchanged. Hence $f$ is invariant under such permutations.
\end{proof}

This proposition formalizes the intuition that macro-observables should depend on behaviorally meaningful summaries rather than on accidental record order. It is especially important for the present framework because the macrostate is intended to be a coarse-grained behavioral representation, not a fragile function of serialization artifacts or irrelevant permutations of input logs.

\subsection{Limits of Entropy-Only Descriptions}

We now formalize one of the central conceptual claims of the paper: disorder coordinates alone do not exhaust the behaviorally relevant content of a cyber window. Two micro-ensembles may have identical marginal distributions and hence identical entropy coordinates, yet differ materially in their structural organization or temporal arrangement. This is precisely why entropy is treated in this paper as one macrostate family among several rather than as the entire state description.

\begin{proposition}[Entropy-only macrostates are not structurally complete]
Let $X_t^{(H)}$ denote a macrostate retaining only distributional disorder coordinates,
and let $X_t$ denote a macrostate containing both disorder and structural-order coordinates.
Then there exist distinct micro-ensembles $\cE_t,\cE_t'$ such that
\[
X_t^{(H)} = X_t'^{(H)}
\qquad\text{but}\qquad
X_t \neq X_t'.
\]
\end{proposition}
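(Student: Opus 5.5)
The plan is to prove the existential claim by an explicit construction: two small windowed micro-ensembles with the same marginal empirical distributions but different interaction graphs. The disorder coordinates $H_t$ are functionals of the categorical marginals $p_t^{(m)}$ (source roles, destination roles, service classes, and so on). If every such marginal coincides for the two ensembles, then every Shannon, R\'enyi, and Tsallis coordinate coincides, so $X_t^{(H)}=X_t'^{(H)}$. We then only need one structural-order coordinate in $S_t$ that separates the two graphs. We will take this to be degree concentration, for example the maximum destination out-degree or fan-out measured in distinct neighbours, since it is among the listed coordinates of $S_t$.

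Concretely, we fix sources $\{s_1,s_2\}$ and destinations $\{d_1,d_2\}$, with all other fields identical (same service, protocol, direction, and attributes), and timestamps placed in one window $W_t$. We take
\[
\cE_t=\{(s_1,d_1),(s_1,d_1),(s_2,d_2),(s_2,d_2)\},\qquad
\cE_t'=\{(s_1,d_1),(s_1,d_2),(s_2,d_1),(s_2,d_2)\}.
\]
In both ensembles the source marginal is uniform on $\{s_1,s_2\}$ and the destination marginal is uniform on $\{d_1,d_2\}$. The service, protocol, and attribute marginals are degenerate and identical. Every entropy coordinate therefore agrees; by the same reasoning as the permutation-invariance proposition above, a coordinate that depends only on identical inputs takes identical values. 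The induced graphs, however, differ. $G_t$ is a perfect matching, in which each source has fan-out $1$, with density $1/2$ of the bipartite edge set. $G_t'$ is the complete bipartite graph $K_{2,2}$, in which each source has fan-out $2$, with density $1$. Hence the graph-density or fan-out coordinate differs, and $X_t\neq X_t'$. We will also note that $\cE_t\neq\cE_t'$ as multisets, so the ensembles are distinct.

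The main point needing care is the scope of the hypothesis on $X_t^{(H)}$. The statement only says that the coordinates are "distributional disorder coordinates", so we will make explicit that these are entropies of single-field marginals, as in the definition of $H_t$. An entropy of the joint (source, destination) pair distribution would in fact separate the two examples: the joint distribution is uniform on 2 cells in $\cE_t$ and on 4 cells in $\cE_t'$. If one wants robustness to joint-pair disorder as well, the plan is to replace the example with a pair of graphs having the same edge multiset size and the same pair-frequency profile but different topology. One option is two edge-disjoint perfect matchings versus a 4-cycle on a larger vertex set. A simpler option is to compare a 4-cycle with two disjoint edges doubled, chosen so that both joint histograms are uniform on the same number of cells. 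We will then separate them by a motif count, such as length-2 paths, or by connected-component structure. We expect this refinement, matching all disorder coordinates while differing structurally, to be the only nontrivial step.
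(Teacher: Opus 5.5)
Your proof is correct and is essentially the paper's argument made concrete. The paper likewise fixes the per-class counts of every single-field observable, so all Shannon, R\'enyi and Tsallis coordinates coincide, and then spreads those counts over many source--destination pairs in one ensemble while concentrating them into repeated pairs or hubs in the other. Your doubled matching versus $K_{2,2}$ is exactly such a pair. It is separated by density, distinct-neighbour fan-out and repeated-pair counts, though not by a normalized degree-share measure, since both graphs are degree-regular.

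The optional joint-entropy refinement is not needed for the statement as the paper proves it, and your ``simpler option'' fails as described. A $4$-cycle and a doubled two-edge matching have joint histograms uniform on $4$ and $2$ cells respectively, so they do not match. A working pair is an $8$-cycle versus two disjoint $4$-cycles on $4+4$ vertices. These agree on all marginal and joint-pair histograms and are separated by component or $4$-cycle counts, but not by density, degree concentration, fan-out, or length-$2$ path counts.
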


\begin{proof}
Consider two windowed micro-ensembles $\cE_t$ and $\cE_t'$ with the same empirical category counts over the observable families used to define the disorder coordinates. For example, let both ensembles contain the same number of events per source class, destination class, service class, and protocol class, so that all empirical distributions entering $X_t^{(H)}$ are identical. Then the corresponding disorder coordinates, including Shannon or generalized entropy values computed from those distributions, are equal:
\[
X_t^{(H)} = X_t'^{(H)}.
\]

Now choose $\cE_t$ and $\cE_t'$ so that their interaction structure differs. For instance, in $\cE_t$ let communications be distributed across many source-destination pairs, while in $\cE_t'$ route the same category counts through a small number of concentrated hubs or repeated motifs. The marginal category histograms remain unchanged, but graph density, degree concentration, motif counts, or related structural summaries differ. Therefore at least one structural coordinate differs, and hence
\[
X_t \neq X_t'.
\]
This proves the claim.
\end{proof}

The key point is that entropy-like quantities collapse all configurations sharing the same chosen marginals into the same disorder summary, even when those configurations are behaviorally very different. The proposition therefore shows that entropy-only macrostates cannot be structurally complete whenever structural coordinates are admitted as behaviorally meaningful observables.

\begin{corollary}
Entropy-only detection can fail to distinguish micro-ensembles with identical
distributional disorder but different structural coordination.
\end{corollary}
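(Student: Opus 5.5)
The plan is to derive the corollary directly from the preceding proposition on the structural incompleteness of entropy-only macrostates. First we fix what "entropy-only detection" means. It is any detector whose decision at window $W_t$ is a function of $X_t^{(H)}$ alone. Formally, it is a score $g(X_t^{(H)})$ followed by a threshold, or more generally any map $\delta:\R^{d_H}\to\{\text{benign},\text{anomalous}\}$ applied to $X_t^{(H)}$. This covers the Shannon, R\'enyi and Tsallis baselines, and also the divergence coordinates $D_t^{\JS}, D_t^{\KL}$ when these are computed from the same marginal distributions, since those too factor through the marginal histograms.

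Next we invoke the proposition. It supplies micro-ensembles $\cE_t,\cE_t'$ with $X_t^{(H)}=X_t'^{(H)}$ but $X_t\neq X_t'$, where the difference sits in a structural coordinate (degree concentration, density, or motif counts). Because $\delta$ factors through $X^{(H)}$, we get $\delta(X_t^{(H)})=\delta(X_t'^{(H)})$. The two ensembles therefore always receive the same verdict, even though they differ in structural coordination. This is precisely the claimed failure to distinguish.

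To make "can fail" operationally meaningful, and not merely a statement of non-injectivity, we would sharpen the construction as follows. Take $\cE_t$ to be a diffuse benign window: the same category counts spread over many source–destination pairs, lying in $\cB$ for a structural-aware distance. Take $\cE_t'$ to be the hub-concentrated rearrangement, for instance a one-to-many scan-like motif, whose structural coordinate lies outside $\cB$ in the full macrostate. Any entropy-only detector then either flags both windows, producing a false positive on $\cE_t$, or flags neither, producing a missed detection on $\cE_t'$. So it necessarily errs on one of them, while a detector on $X_t$ can in principle separate the two.

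The main obstacle is the consistency of this sharpened construction. We must check that hub-routing really preserves every marginal histogram used in $H_t$. A clean choice is to keep the multiset of source classes and the multiset of destination classes fixed and change only the pairing, that is, which source talks to which destination. This leaves the marginals intact but alters the degree sequence of $G_t$ at the level of individual entities. Finally, the benign/anomalous labelling in the sharpened version is modelling input rather than a theorem, so we state the corollary's conclusion conditionally on such a labelling.
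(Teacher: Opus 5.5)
Your proposal is correct and follows the paper's argument: invoke the preceding proposition and note that any detector factoring through $X_t^{(H)}$ returns the same verdict on both ensembles; your conditional benign/anomalous pairing just makes explicit the paper's caveat that failure occurs when the structural difference is behaviorally meaningful, much as the later distinguishability-beyond-entropy theorem does. One small caution is that extending the conclusion to $D_t^{\JS}$ and $D_t^{\KL}$ relies on the identical category histograms built in the proposition's proof (and in your re-pairing construction), because those divergences do not factor through the entropy values $X_t^{(H)}$ alone.
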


\begin{proof}
By the preceding proposition, there exist distinct micro-ensembles with identical entropy-only macrostates but different full macrostates. Any detector using only $X_t^{(H)}$ assigns the same state representation to both ensembles and therefore cannot distinguish them on the basis of entropy-only information alone. If the difference between the ensembles is behaviorally meaningful, then entropy-only detection may fail.
\end{proof}

This corollary is conceptually central to the paper. It does not claim that entropy is unhelpful; rather, it proves that entropy alone cannot capture all relevant forms of cyber reorganization. That gap is what motivates the introduction of structure, persistence, volatility, and related coordinates into the macrostate.

\subsection{State and Transition Distinguishability}

The next statements justify the use of transition-aware anomaly scoring. The state score measures how far the current macrostate lies from the benign set. The transition score measures how implausible the step from $X_t$ to $X_{t+1}$ is under nominal dynamics. To make this precise, we impose a concentration assumption on benign transitions.

\begin{assumption}[Benign transition concentration]
Assume there exists a nominal transition operator $T$ and constants $\sigma,\eta>0$
such that for benign windows
\[
\Prob\!\left(\norm{X_{t+1}-T(X_t)}>\eta\right)\le \sigma.
\]
\end{assumption}

This assumption says that, under benign operation, the next macrostate is usually close to the prediction provided by the nominal transition operator. The constants $\eta$ and $\sigma$ encode, respectively, a tolerance radius and a tail probability for benign deviations. In practice, this is the natural probabilistic regularity one hopes to estimate from benign training trajectories.

\begin{proposition}[Transition anomaly criterion]
Under benign transition concentration, if
\[
\norm{X_{t+1}-T(X_t)}>\eta,
\]
then the transition is anomalous at confidence level controlled by $\sigma$.
\end{proposition}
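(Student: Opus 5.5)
The plan is to read the statement as a hypothesis test and prove it by reducing it directly to Benign transition concentration. First I will make ``anomalous at confidence level controlled by $\sigma$'' precise. Take the null hypothesis $H_0$ that the pair $(X_t,X_{t+1})$ is generated by benign dynamics, so that $X_t\in\cB$ and the step follows the nominal law underlying $T$. Take the decision rule $\delta_t := \mathbf{1}\{\mathcal{S}_t^{\mathrm{trans}}>\eta\}$, using the transition anomaly score $\mathcal{S}_t^{\mathrm{trans}}=\norm{X_{t+1}-T(X_t)}$ from the previous section. The claim to prove then becomes: this rule is a level-$\sigma$ test, meaning its false-alarm probability under $H_0$ is at most $\sigma$. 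Equivalently, declaring a transition anomalous whenever $\mathcal{S}_t^{\mathrm{trans}}>\eta$ mislabels a benign transition with probability no greater than $\sigma$.

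The key steps, in order, are as follows.
\begin{enumerate}[label=(\arabic*),leftmargin=2em]
\item Under $H_0$, the event $\{\delta_t=1\}$ is exactly the event $\{\norm{X_{t+1}-T(X_t)}>\eta\}$.
\item By the assumption, this event has probability at most $\sigma$. Hence $\Prob_{H_0}(\delta_t=1)\le\sigma$, which gives the type-I error bound.
\item To interpret an observed exceedance, note that the observed value satisfies $\mathcal{S}_t^{\mathrm{trans}}>\eta$. The tail probability $\Prob_{H_0}(\mathcal{S}^{\mathrm{trans}}\ge s)$ is nonincreasing in $s$, so the p-value of the observation is bounded by $\Prob_{H_0}(\mathcal{S}^{\mathrm{trans}}>\eta)\le\sigma$. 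Up to the strict versus non-strict inequality at the boundary, this can be handled by approaching $s\downarrow\eta$ or by stating the bound with $\ge$.
\item I will add a remark that a bound across many windows follows from a union bound: the expected number of false alarms over $n$ benign transitions is at most $n\sigma$.
\end{enumerate}

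The main obstacle is not computational but interpretive. The assumption is stated marginally ``for benign windows'', while the proposition is about a single observed transition. I need to be explicit that the probability is taken over the benign transition law, whether conditional on $X_t$ or averaged over the benign stationary distribution, and that the guarantee controls false alarms rather than certifying that a flagged transition is malicious. Certifying maliciousness would require a model of attack transitions, which the paper does not supply.

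I will therefore phrase the conclusion as follows: rejecting benignity upon exceedance incurs error probability at most $\sigma$.
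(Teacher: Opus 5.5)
Your proposal is correct and follows the paper's argument: both directly invoke the benign transition concentration assumption to bound the benign probability of the exceedance event $\{\norm{X_{t+1}-T(X_t)}>\eta\}$ by $\sigma$. Your hypothesis-test framing (type-I error at most $\sigma$) makes precise what the paper leaves as ``confidence level controlled by $\sigma$''. Two minor points: since the observed score strictly exceeds $\eta$, the inclusion $\{\mathcal{S}^{\mathrm{trans}}\ge s_{\mathrm{obs}}\}\subseteq\{\mathcal{S}^{\mathrm{trans}}>\eta\}$ already removes any boundary issue in step (3), and the $n\sigma$ bound on expected false alarms comes from linearity of expectation rather than a union bound.
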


\begin{proof}
By the benign transition concentration assumption,
\[
\Prob\!\left(\norm{X_{t+1}-T(X_t)}>\eta \,\middle|\, X_t\in\cB\right)\le \sigma.
\]
Therefore any observed transition satisfying
\[
\norm{X_{t+1}-T(X_t)}>\eta
\]
falls into an event whose benign probability is at most $\sigma$. Hence the transition is anomalous at confidence level controlled by $\sigma$.
\end{proof}

This proposition is deliberately modest: it does not prove optimality of any detector, only that large transition residuals are statistically exceptional under the assumed benign model. That is exactly the level of justification needed for the present paper.

We now state the assumption that captures the possibility of adversarial reorganization that escapes entropy-only characterization.

\begin{assumption}[Structured adversarial reorganization]
Assume an adversarial perturbation may preserve selected marginal distributions
while inducing a nontrivial change in structural or coupling coordinates.
\end{assumption}

This assumption is well aligned with many cyber scenarios. An attacker may preserve overall traffic volumes or even category histograms while rerouting communication through a small number of hubs, establishing persistent low-volume beaconing channels, concentrating role transitions, or otherwise reorganizing the relational structure of the system.

\begin{theorem}[Distinguishability beyond entropy]
Under structured adversarial reorganization, there exist attack windows for which
entropy-only state scores fail to separate benign from adversarial behavior, while
macrostate scores using structural or coupling coordinates succeed.
\end{theorem}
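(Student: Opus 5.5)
The plan is to reduce the theorem to the earlier proposition on structural incompleteness of entropy-only macrostates, now applied with one ensemble benign and the other adversarial. The whole argument is existential: we only need one benign window $\cE_t$ and one attack window $\cE_t'$ with the required properties.

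\textbf{Step 1: build the pair of windows.} Take a benign window $\cE_t$ whose macrostate $X_t$ lies in $\cB$, or at distance at most some small $\varepsilon$ from it. By the structured adversarial reorganization assumption, there is an adversarial perturbation that
\begin{enumerate}[label=(\roman*),leftmargin=2em]
    \item preserves every marginal distribution entering the disorder block $H_t$, and
    \item changes at least one structural or coupling coordinate.
\end{enumerate}
Concretely, as in the proof of the structural-incompleteness proposition, we reroute the same category counts through a few concentrated hubs. Call the result $\cE_t'$. The disorder coordinates are functionals of the preserved marginals, and the permutation-invariance proposition guarantees they depend on nothing else. Hence $X_t^{(H)} = X_t'^{(H)}$.

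\textbf{Step 2: entropy-only scores fail.} Any entropy-only state score has the form $d^{(H)}(X^{(H)},\cB^{(H)})$, where $\cB^{(H)}$ is the projection of $\cB$ onto the disorder coordinates. Since $X_t^{(H)} = X_t'^{(H)}$, both windows receive identical scores. So no threshold can separate them; this is exactly the corollary following the structural-incompleteness proposition.

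\textbf{Step 3: the full macrostate score succeeds.} Here we must show $d(X_t',\cB) > d(X_t,\cB)$, or more strongly that $d(X_t',\cB)$ exceeds a detection threshold. Knowing only $X_t \neq X_t'$ is not enough, because the reorganized point could itself land inside $\cB$. This is the main obstacle.

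I would handle it by making the assumption quantitative, which the theorem's existential phrasing permits. Require that the perturbation shift a structural coordinate by an amount $\Delta_S > 0$ large enough to leave the benign envelope. Concretely, in a normalized Euclidean or Mahalanobis geometry, choose the hub concentration (for example degree concentration pushed toward its extreme value) so that the structural coordinate of $X_t'$ lies outside the projection of $\cB$ onto the structural block, with gap $\gamma > 0$.

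Because the structural block is a coordinate projection, distance to $\cB$ in the full space is bounded below by distance in that block. This gives
\[
d(X_t',\cB)\ \ge\ \gamma\ >\ \varepsilon\ \ge\ d(X_t,\cB).
\]
Any threshold in $(\varepsilon,\gamma)$ therefore separates the two windows, while the entropy-only score cannot.

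I would close by remarking that a finite benign training set occupies a bounded region in structural coordinates. Such a $\gamma$ is therefore achievable by a sufficiently extreme concentration, for instance all traffic through a single hub, whenever the benign data never exhibits that pattern.
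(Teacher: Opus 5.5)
Your Steps 1 and 2 match the paper's argument. You choose an attack window that preserves the marginals entering $X_t^{(H)}$, so any score depending only on $X_t^{(H)}$ assigns the benign and adversarial windows the same value.

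The two arguments diverge in Step 3. The paper stops at $X_t\neq X_t'$ and asserts that any full-state score giving positive weight to the changed structural or coupling subspace can therefore separate the windows. You correctly note that this does not follow for the paper's own state score $\mathcal{S}^{\mathrm{state}}_t=d(X_t,\cB)$. A distance to a set does not separate distinct points in general: both macrostates could lie in $\cB$ and both score $0$. You repair this with a margin $\gamma>\varepsilon$ and the projection bound $d(X_t',\cB)\ge d_S(\pi_S X_t',\pi_S\cB)$. That bound holds in normalized Euclidean geometry. For Mahalanobis it holds if the block distance uses the marginal covariance of the structural block (Schur complement), not the corresponding block of the precision matrix.

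The trade-off is this. The paper's argument needs only the qualitative assumption. What it actually establishes is that the full macrostates differ, so that \emph{some} full-state score separates them (for example, distance to the benign window's own macrostate); it does not show that $d(\cdot,\cB)$ does. Your version gives genuine threshold separation by the distance-to-$\cB$ score, but only under a quantitative strengthening of the assumption.

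Say explicitly that this strengthening is a new hypothesis. The theorem's existential phrasing does not grant it. The quantifier ranges over attack windows that the assumption supplies, and the assumption promises only a \emph{nontrivial} structural change, not one that leaves the benign envelope.

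Your closing remark does not close this either, for two reasons. First, concentration and density coordinates are bounded a priori, so a bounded benign region leaves no guaranteed room outside it. Second, extreme concentration such as all traffic through a single hub generally conflicts with Step 1's requirement that the role, service, and protocol marginals entering $H_t$ be preserved. A single host typically carries a single role, which collapses a non-degenerate role distribution.

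A smaller point concerns how you justify $X_t^{(H)}=X_t'^{(H)}$. It follows directly from the entropy coordinates being functions of the $p_t^{(m)}$. The permutation-invariance proposition takes that dependence as its hypothesis and concerns reorderings of a single ensemble, whereas $\cE_t'$ is a different multiset.
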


\begin{proof}
Under the structured adversarial reorganization assumption, choose an attack window whose perturbation preserves the marginal distributions used in the entropy-only macrostate $X_t^{(H)}$ but changes at least one structural or coupling coordinate. By construction, the entropy-only representation of the adversarial window coincides with that of a benign window sharing the same selected marginals. Hence any state score depending only on $X_t^{(H)}$ assigns the same entropy-only value to both windows and therefore fails to separate them.

On the other hand, because at least one structural or coupling coordinate changes, the full macrostate differs between the benign and adversarial windows. Consequently, any macrostate score that depends on a distance in the full state-space and assigns positive weight to the changed structural or coupling subspace can separate the two windows. Therefore there exist attack windows for which entropy-only scores fail while fuller macrostate scores succeed.
\end{proof}

The theorem formalizes the main interpretive payoff of the macrostate framework. If adversarial behavior can manifest as structured reorganization without strong marginal redistribution, then entropy-only state descriptions are provably insufficient in those cases, while a broader behavioral state-space has the capacity to distinguish them.

\subsection{Multiscale Necessity}

The final result addresses temporal scale. The preceding framework is defined for a fixed window length $\Delta$, but in practice no single scale is universally sufficient. Some anomalies are visible only when the analyst looks closely; others emerge only after longer accumulation. This motivates the need for multiscale macrostate families.

\begin{proposition}[Single-scale insufficiency]
There exist anomaly patterns detectable at one window scale $\Delta_1$ and not at
another scale $\Delta_2$.
\end{proposition}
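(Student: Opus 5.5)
The statement is existential, so I will prove it by exhibiting explicit constructions. I will fix a concrete macrostate coordinate and a concrete detector, and show that some event stream gives detectable separation at one scale but not at the other. "Detectable" will mean that the state score $d(X_t,\cB)$, computed from a fixed coordinate, differs between the anomalous stream and a benign reference stream at scale $\Delta_1$ but coincides for every window at scale $\Delta_2$. This parallels the proof of the entropy-incompleteness proposition: detection fails when two ensembles are mapped to identical macrostates. I will present the construction in both directions, fine-visible/coarse-invisible and coarse-visible/fine-invisible, to make clear that neither scale dominates. One direction already suffices for the statement.

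\textbf{Step 1 (fine scale visible, coarse scale invisible).} Take a benign stream with events at a constant rate $r$, say one flow every unit of time, and let $A_t$ be the total flow count in the window. Build the anomalous stream by moving events within each coarse block $[k\Delta_2,(k+1)\Delta_2)$. The count inside each block is unchanged, but all events of that block are concentrated into a burst of length $\Delta_1<\Delta_2$. Use non-overlapping windows with $\delta=\Delta$ and $\Delta_2$ an integer multiple of $\Delta_1$.

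At scale $\Delta_2$, every window has count $r\Delta_2$ in both streams. So $X_t$ coincides whenever the coordinates depend only on per-window counts and histograms, and the categorical attributes are also held fixed. Any state score then assigns identical values, which mirrors the corollary to the entropy-incompleteness proposition.

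At scale $\Delta_1$, the anomalous stream produces windows with count $r\Delta_2$ and windows with count $0$, while every benign window has count $r\Delta_1$. Taking $\cB=\{r\Delta_1\}$, or a small tube around it, gives $d(X_t,\cB)>0$ for the burst windows.

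\textbf{Step 2 (coarse scale visible, fine scale invisible).} This is a slow-drift or low-rate pattern, such as beaconing. It adds a small number of events that each appear in a fine window in a way indistinguishable from benign fluctuation. A simple version is one extra event per $\Delta_2$-block, placed in a fine window that benign traffic also sometimes populates at that level. Only accumulation over $\Delta_2$ pushes the count or the persistence coordinate beyond the benign tube. I will phrase this against a benign set $\cB$ defined as an $\varepsilon$-tube, with $\varepsilon$ chosen between the per-window increments at the two scales.

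\textbf{Main obstacle.} The hardest step is making "detectable" precise enough that the argument is not vacuous while staying within the paper's loose definitions. In particular, overlapping windows with $\delta<\Delta$ can straddle block boundaries and break the exact equality at the coarse scale. I will handle this by restricting to $\delta=\Delta$ and aligned blocks, and by noting that volatility coordinates which use sub-window bins would see the burst. The coordinate set must therefore be fixed to per-window aggregates for the statement as posed. With those conventions stated explicitly, Step 1 alone proves the proposition.
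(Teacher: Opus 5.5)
Your proof is correct, but it takes a sharper route than the paper's. The paper argues both directions heuristically. A burst of duration comparable to $\Delta_1$ is \emph{diluted} by surrounding benign activity inside a $\Delta_2$-window, ``potentially'' below detectability. A slow drift accumulates over $\Delta_2$ while each $\Delta_1$-window stays near benign. No detector, threshold, or noise model is fixed, and in a noiseless model an added burst always shifts the coarse count, so ``below detectability'' is never actually established. Your Step~1 replaces dilution by exact conservation. Because the burst is a within-block retiming of benign events, every per-window aggregate at the aligned coarse scale coincides with its benign counterpart. The anomaly is therefore invisible to \emph{every} detector reading the $\Delta_2$-trajectory, state and transition scores alike, much as in the entropy-incompleteness proposition, while the fine-scale count leaves the singleton benign set. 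This gives a genuine proof rather than a plausibility argument. The price is the conventions you state explicitly: aligned windows with $\delta=\Delta$, a deterministic benign stream, and no sub-window volatility coordinates. The paper's own instantiation of $V_t$ does use sub-bin variances, and these would expose the burst at $\Delta_2$.

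One caution about Step~2, which matters because the paper's next corollary invokes ``and vice versa.'' The mechanism you name does not work as written. One extra event per $\Delta_2$-block changes the per-window count by exactly one at \emph{both} scales, so there is no $\varepsilon$ strictly between the increments. A clean fix is to let the benign $\Delta_1$-counts take values in $\{r\Delta_1-1,\, r\Delta_1,\, r\Delta_1+1\}$ (all attained) while summing to exactly $r\Delta_2$ on every block. Then add one event to a fine window whose count is at most $r\Delta_1$. Every fine-scale count stays in the benign set, while the coarse count leaves $\{r\Delta_2\}$. Even then, this direction holds only for per-window scores, because with aligned windows each coarse count is the sum of the corresponding fine counts.
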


\begin{proof}
Consider first a short, coordinated burst anomaly whose duration is much smaller than $\Delta_2$ but comparable to $\Delta_1$, with $\Delta_1<\Delta_2$. At scale $\Delta_1$, the anomalous burst occupies a substantial portion of the window and significantly alters the macro-observables, making detection possible. At scale $\Delta_2$, the same burst is averaged together with a much larger amount of benign activity, potentially diluting its effect below detectability.

Conversely, consider a slow drift anomaly whose effect accumulates gradually over a horizon much longer than $\Delta_1$ but is visible over $\Delta_2$. At the shorter scale $\Delta_1$, each local window may remain close to benign behavior, so the anomaly is not distinguishable. At the longer scale $\Delta_2$, the cumulative shift changes the macrostate enough to be detected.

Thus there exist anomaly patterns that are detectable at one scale and not at another.
\end{proof}

This proposition is basic but operationally important. It shows that scale is not merely a tuning parameter; it is part of the representational problem itself. A fixed window size can hide some behaviors and reveal others.

\begin{corollary}[Value of multiscale macrostate families]
A multiscale macrostate family strictly dominates any fixed single-scale macrostate
on classes of anomalies exhibiting scale-dependent visibility.
\end{corollary}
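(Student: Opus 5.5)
The plan is to make "strict dominance" precise and then read it off from the Single-scale insufficiency proposition. Fix a finite (or countable) set of scales $\Delta_1<\dots<\Delta_m$. Take the multiscale family to be the concatenation
\[
X_t^{\mathrm{ms}} := \big(\PhiMacro_{\Delta_1}(\cE_t^{\Delta_1}),\dots,\PhiMacro_{\Delta_m}(\cE_t^{\Delta_m})\big),
\]
where $\cE_t^{\Delta_j}$ is the micro-ensemble of the window of length $\Delta_j$ ending at the current time. We say a representation \emph{detects} an anomaly pattern if some score that is a measurable function of that representation separates the anomalous windows from the benign ones, for example with a positive gap in the sense of the state score $d(\cdot,\cB)$. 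Dominance then has two parts. First, every pattern detectable by a single-scale macrostate $X_t^{\Delta_j}$ is detectable by $X_t^{\mathrm{ms}}$. Second, for the class of patterns with scale-dependent visibility, there is a pattern that $X_t^{\mathrm{ms}}$ detects and that a given fixed-scale macrostate does not.

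\textbf{Weak dominance.} This part is a projection argument. Each $X_t^{\Delta_j}$ is a coordinate projection $\pi_j(X_t^{\mathrm{ms}})$. So any score $s(X_t^{\Delta_j})$ can be written as $s\circ\pi_j$ applied to the multiscale state, and it separates exactly the same windows. For a distance-based state score, we can also take the benign set $\cB^{\mathrm{ms}}$ to be the product of the single-scale benign sets. Then a distance that weights only block $j$ reproduces $d(X_t^{\Delta_j},\cB_j)$. By the Well-defined macrostate map proposition, applied blockwise, $X_t^{\mathrm{ms}}$ is well-defined, so this composition is legitimate.

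\textbf{Strictness.} Fix any single scale $\Delta_k$. By the Single-scale insufficiency proposition, there is a pattern detectable at some scale but not at $\Delta_k$. When $\Delta_k$ is large, we use the short burst from that proposition's proof, detectable at a smaller $\Delta_j$. When $\Delta_k$ is small, we use the slow drift, detectable at a larger $\Delta_j$. Provided the family contains such a $\Delta_j$, which is the meaning of "classes of anomalies exhibiting scale-dependent visibility" relative to the family, the weak-dominance step shows that $X_t^{\mathrm{ms}}$ detects this pattern through block $j$, while $X_t^{\Delta_k}$ does not. Since $k$ was arbitrary, no fixed single-scale macrostate matches the multiscale family on this class.

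\textbf{Main obstacle.} The hard part is not the argument but the formalization. The corollary does not say what "dominates" or "detectable" mean, and the Single-scale insufficiency proposition is itself heuristic ("potentially diluting"). I would therefore state the detectability notion explicitly as above, and assume that the anomaly class contains, for each fixed scale in the family, at least one pattern invisible at that scale but visible at another scale in the family. Under that hypothesis the proof reduces to the projection argument plus the existence claim. Without it, strictness can fail: for instance, if every pattern in the class were visible at a single common scale, the multiscale family would only tie.
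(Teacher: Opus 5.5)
Your proposal is correct and follows the same route as the paper's proof. The paper invokes the Single-scale insufficiency proposition to get patterns visible at one scale but not the other, notes that a family containing both scales detects the union, and concludes that each fixed scale misses something; your projection argument for weak dominance and your explicit hypothesis that visibility is scale-dependent relative to the scales in the family make precise what that one-line proof takes for granted.
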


\begin{proof}
By the preceding proposition, there exist anomaly classes detectable at $\Delta_1$ but not $\Delta_2$, and vice versa. A multiscale family including both scales can detect the union of these classes, whereas either single-scale family misses at least one class. Hence the multiscale family strictly dominates any fixed single-scale macrostate on anomaly classes with scale-dependent visibility.
\end{proof}

Taken together, the results of this section provide the formal backbone of the paper. The macrostate construction is well-defined; coarse-grained observables are invariant under irrelevant permutations; entropy-only descriptions are incomplete in the presence of structural reorganization; transition anomaly scores are justified under benign concentration assumptions; and multiscale constructions are necessary because anomaly visibility depends on temporal resolution. None of these claims is overly strong, but together they justify the paper's central design choices. The next section turns from theory to experimental design, where these ideas are instantiated and tested on benchmark network telemetry.

\section{Experimental Design}

This section specifies the experimental plan for evaluating the proposed macrostate framework on benchmark network telemetry. The guiding principle is that the experiments should test the central claims of the paper directly: whether a finite macrostate representation improves anomaly discrimination beyond entropy-only baselines, whether transition-aware scoring adds value beyond state-only scoring, whether structural and temporal macro-variables materially improve separability, and whether multiscale analysis is necessary for robust detection. Because the paper is explicitly about dynamics, the experimental design preserves temporal order throughout preprocessing, windowing, training, validation, and testing. In particular, we do not randomize away the temporal structure of the underlying telemetry, since doing so would undermine the state-transition viewpoint introduced earlier.

\subsection{Research Questions}

The empirical study is organized around the following research questions.

\begin{itemize}[leftmargin=2em]
    \item \textbf{RQ1.} Does a finite macrostate representation outperform entropy-only baselines for anomaly detection in network telemetry?
    
    \item \textbf{RQ2.} Do transition-based anomaly scores improve discrimination between benign workload drift and adversarial change?
    
    \item \textbf{RQ3.} Which macro-variable families contribute most to separability?
    
    \item \textbf{RQ4.} How sensitive is performance to window scale?
    
    \item \textbf{RQ5.} Can macrostate trajectories reveal interpretable behavioral regimes?
\end{itemize}

These questions are ordered intentionally. RQ1 establishes whether the macrostate idea has practical value at all. RQ2 tests the specifically dynamical component of the framework. RQ3 clarifies whether the gains come mainly from structural, temporal, or persistence-related coordinates rather than from simple dimensional expansion. RQ4 tests the theory claim that single-scale analysis is insufficient. RQ5 addresses the interpretability goal of the paper by asking whether the induced state-space supports useful regime-level analysis beyond scalar detection.

\subsection{Datasets}

This paper focuses on benchmark network telemetry datasets that provide sufficient event structure to construct time-ordered windows and evaluate anomaly discrimination under varying attack types. The primary datasets are as follows.

\begin{itemize}[leftmargin=2em]
    \item \textbf{UNSW-NB15.} Used as a general-purpose network intrusion benchmark with mixed benign and attack traffic and a broad range of traffic attributes.
    
    \item \textbf{CIC-IDS2017} (or a closely related CIC intrusion dataset). Used as the main benchmark for time-ordered network attack behavior across multiple attack categories.
    
    \item \textbf{CICIoT2023} (optional, if experiment capacity permits). Used as an additional robustness dataset emphasizing high-volume IoT-oriented traffic and attack diversity.
\end{itemize}

For each dataset, we record the following properties in the final implementation package:
\begin{itemize}[leftmargin=2em]
    \item telemetry type and native record schema,
    \item timestamp availability and temporal continuity,
    \item attack label granularity,
    \item attack family categories,
    \item number of records and recording span,
    \item train/validation/test partition plan.
\end{itemize}

The split strategy preserves temporal order. We do \emph{not} randomly shuffle records prior to window construction. Instead, for each dataset we create chronologically ordered event streams and then partition windows into training, validation, and test sets using contiguous temporal segments. Benign-only training is preferred for state-baseline and transition-baseline estimation, while validation and test segments include both benign and anomalous windows. When attack labels are sparse or temporally localized, segment boundaries are chosen so that validation and test sets contain sufficient attack coverage without contaminating benign-only training baselines.

If a dataset is too fragmented temporally to support clean transition modeling, we use it only for state-only comparisons and document that limitation explicitly. The primary emphasis remains on datasets that preserve enough sequential structure for macrostate dynamics to be meaningful.

\subsection{Preprocessing and Event Normalization}

The preprocessing pipeline converts each dataset into a unified event stream compatible with the microstate formalism introduced earlier. The exact steps are:

\begin{enumerate}[leftmargin=2em,label=\arabic*)]
    \item \textbf{Schema normalization.} Convert each raw record into a common network-event schema with fields corresponding to timestamp, source identifier, destination identifier, service or port class, protocol, direction, and event attributes such as bytes, packets, and duration.
    
    \item \textbf{Role mapping.} Map source and destination identifiers into coarse role categories whenever possible, such as internal, external, server, client, subnet class, or dataset-specific host role. If full role labeling is not available, use robust derived proxies such as internal/external partitioning, source/destination frequency class, or service-host behavior type.
    
    \item \textbf{Service normalization.} Bucket ports or service identifiers into service classes. At minimum this includes well-known ports, common protocol families, and an ``other'' bucket. If application-layer labels are available, they are folded into the same service-class framework.
    
    \item \textbf{Numerical binning.} Bin bytes, packets, and durations into stable coarse ranges for empirical distribution construction. Both quantile-based and log-scale binning are acceptable, but the binning strategy must be fixed per dataset prior to model training.
    
    \item \textbf{Chronological ordering.} Sort events by timestamp and construct a single ordered event stream per recording segment.
    
    \item \textbf{Window family construction.} Build time windows at multiple scales from the ordered stream and assign each event to the windows it overlaps.
\end{enumerate}

All preprocessing outputs are serialized in reusable intermediate files so that feature extraction, baseline fitting, and ablation runs can be reproduced without re-running the full ingestion pipeline.

For class imbalance, we do \emph{not} rebalance at the raw event level. Instead, we preserve the true window-level incidence of anomalies and handle imbalance at the scoring and evaluation stage through AUPRC, threshold calibration, and, where appropriate, class-weighted shallow supervised models. For unsupervised and one-class methods, benign-only training remains the default.

Label alignment is handled at the window level. Event-level attack labels, session labels, or flow labels are converted into window labels by overlap rules specified below. This is necessary because the macrostate is defined on windows rather than on individual events.

\subsection{Windowing Plan}

The core experimental window family uses the following window lengths:
\[
\Delta \in \{30\text{s}, 1\text{m}, 5\text{m}, 15\text{m}\}.
\]
The default stride is
\[
\delta = \Delta/2.
\]
This provides overlapping windows while keeping computational cost moderate.
The choice is intentional: because the paper studies macrostate trajectories
rather than isolated window scores, the stride must be small enough to expose
boundary-crossing behavioral shifts.  A disjoint choice $\delta=\Delta$ can
miss short-lived onset behavior or split a transition across adjacent windows
in a way that weakens the induced anomaly signal.  The half-window stride
therefore gives a practical compromise between trajectory fidelity and
computational cost.

As a robustness study, one selected dataset subset will also be evaluated with
\[
\delta = \Delta/4
\]
to assess whether denser trajectory sampling materially changes the state and transition results.

Window labels are assigned by overlap with malicious events. Let $\rho$ denote the minimum fraction of window activity associated with maliciously labeled records required to classify the window as anomalous. We perform a sensitivity study over
\[
\rho \in \{0.05, 0.25, 0.50\}.
\]
Operationally, this means:
\begin{itemize}[leftmargin=2em]
    \item a low threshold $\rho=0.05$ captures early contamination and is useful for detection-delay analysis,
    \item an intermediate threshold $\rho=0.25$ provides a balanced labeling criterion,
    \item a high threshold $\rho=0.50$ enforces stricter anomaly occupancy and tests robustness to label ambiguity.
\end{itemize}

For datasets whose labels are session-level rather than event-level, the overlap criterion is computed using record counts or duration overlap, depending on what the dataset supports. The chosen rule must be documented explicitly per dataset.

This windowing plan is directly aligned with the theory section. Short windows help capture brief coordinated bursts; medium windows support local regime characterization; and longer windows capture slower drift and sustained reorganization. The multiscale family is therefore not merely a robustness convenience, but a test of the paper's single-scale insufficiency claim.

\subsection{Macro-Variable Instantiation}

Although the full macrostate formalism allows many coordinates, this paper adopts a moderate, reproducible instantiation designed to test the main hypothesis without excessive feature proliferation. The first-paper macrostate is:

\paragraph{Activity family \(A_t\).}
\begin{itemize}[leftmargin=2em]
    \item total flow count,
    \item total byte count,
    \item unique source count,
    \item unique destination count,
    \item unique service count.
\end{itemize}

\paragraph{Disorder family \(H_t\).}
\begin{itemize}[leftmargin=2em]
    \item Shannon entropy over source-role distribution,
    \item Shannon entropy over destination-role distribution,
    \item Shannon entropy over service distribution,
    \item Shannon entropy over protocol distribution,
    \item Shannon entropy over byte-bin distribution,
    \item optional one R\'enyi entropy coordinate,
    \item optional one Tsallis entropy coordinate.
\end{itemize}

\paragraph{Structural family \(S_t\).}
\begin{itemize}[leftmargin=2em]
    \item interaction graph density,
    \item source-degree concentration,
    \item destination-degree concentration,
    \item service-access concentration,
    \item simple motif counts such as fan-out, many-to-one, and repeated-pair motifs.
\end{itemize}

\paragraph{Volatility family \(V_t\).}
\begin{itemize}[leftmargin=2em]
    \item sub-window burstiness of flow count,
    \item within-window variance of sub-bin flow counts,
    \item within-window variance of sub-bin byte totals.
\end{itemize}

\paragraph{Persistence family \(P_t\).}
\begin{itemize}[leftmargin=2em]
    \item repeated edge ratio relative to prior window,
    \item recurring source-destination pair ratio,
    \item recurring source-service path ratio.
\end{itemize}

\paragraph{Deviation family \(D_t\).}
\begin{itemize}[leftmargin=2em]
    \item Jensen--Shannon divergence to benign baseline histogram,
    \item optional geometric distance to benign macrostate centroid or benign cluster manifold.
\end{itemize}

This work deliberately omits a large explicit coupling family \(C_t\) in the first empirical pass unless a dataset supports clean subsystem partitioning. If coupling variables can be instantiated robustly, they may be added as an extension, but they are not required for the core first-paper experiments.

All macro-variables are standardized using statistics computed on benign training windows only. This prevents attack contamination of scale parameters and keeps the evaluation aligned with the benign-baseline philosophy of the framework.

\subsection{Baselines}

We compare the proposed macrostate representation against three baseline families.

\paragraph{Baseline Family A: Entropy-only detectors.}
\begin{itemize}[leftmargin=2em]
    \item threshold detector using one or more Shannon entropy coordinates,
    \item shallow detector using the vector of entropy features alone,
    \item R\'enyi/Tsallis variants when computationally feasible.
\end{itemize}

The threshold detector serves as the simplest classical comparison. The shallow entropy-vector detector tests whether the gains of the proposed framework are merely due to using multiple disorder coordinates rather than a single entropy scalar.

\paragraph{Baseline Family B: Conventional anomaly detectors.}
\begin{itemize}[leftmargin=2em]
    \item Isolation Forest on standard aggregate features,
    \item One-Class SVM on standard aggregate features,
    \item autoencoder on standard aggregate features (optional, if time permits).
\end{itemize}

The standard aggregate feature set consists of simple count and volume summaries similar to what many baseline systems would use operationally, but without the full macrostate organization.

\paragraph{Baseline Family C: Proposed macrostate detectors.}
\begin{itemize}[leftmargin=2em]
    \item detector on \(H_t\) only,
    \item detector on \((A_t,H_t)\),
    \item detector on \((A_t,H_t,S_t,V_t)\),
    \item detector on full state \((A_t,H_t,S_t,V_t,P_t,D_t)\).
\end{itemize}

These staged models are the core ablation ladder and make it possible to test where the gains actually come from.

For transition modeling, the candidate nominal transition operators are:
\begin{itemize}[leftmargin=2em]
    \item naive persistence: \(T(X_t)=X_t\),
    \item linear regression from \(X_t\) to \(X_{t+1}\),
    \item VAR-style transition model on selected coordinates,
    \item optional small recurrent model if the simpler models are clearly inadequate.
\end{itemize}

The default plan is to begin with naive persistence and linear regression, since these are easier to interpret and sufficient to test whether transition-aware scoring has empirical value.

\subsection{Evaluation Metrics}

The primary evaluation metrics are:
\begin{itemize}[leftmargin=2em]
    \item AUROC,
    \item AUPRC,
    \item F1 and best-F1 over threshold sweep,
    \item false positive rate at fixed recall,
    \item detection delay / time-to-detect.
\end{itemize}

AUPRC is especially important because anomalous windows are typically sparse relative to benign windows. Detection delay is included because the state-transition framework is intended not only to classify windows, but to identify emerging anomalies early.

If regime discovery is used, we also report:
\begin{itemize}[leftmargin=2em]
    \item regime purity,
    \item mean dwell time,
    \item transition concentration,
    \item benign versus anomalous regime occupancy proportions.
\end{itemize}

Interpretability outputs include:
\begin{itemize}[leftmargin=2em]
    \item macrostate trajectory plots,
    \item regime maps in reduced dimension,
    \item regime-transition heatmaps,
    \item persistence and recovery curves where applicable.
\end{itemize}

The goal is to ensure that the paper is not judged solely on scalar metrics. Since one of the main claims is interpretability at the regime level, the visual and structural outputs are first-class results, not just auxiliary illustrations.

\subsection{Ablation Study Plan}

The feature-family ablation ladder is:

\begin{itemize}[leftmargin=2em]
    \item \textbf{Ablation 1:} \(H\) only,
    \item \textbf{Ablation 2:} \(A+H\),
    \item \textbf{Ablation 3:} \(A+H+V\),
    \item \textbf{Ablation 4:} \(A+H+S+V\),
    \item \textbf{Ablation 5:} full state \(A+H+S+V+P+D\).
\end{itemize}

This ladder is designed to answer RQ3 directly. It shows whether structural and persistence variables provide real gains beyond entropy and load alone.

The transition ablation includes:
\begin{itemize}[leftmargin=2em]
    \item state-only score,
    \item transition-only score,
    \item fused state-transition score.
\end{itemize}

The fused score can be implemented as a validation-tuned weighted combination or as a shallow meta-detector over the two scores.

The scale ablation includes:
\begin{itemize}[leftmargin=2em]
    \item each window length alone,
    \item multiscale fusion across all selected window lengths.
\end{itemize}

Multiscale fusion may be implemented by concatenating per-scale scores, taking a max-score rule, or training a shallow fusion layer on per-scale macrostate summaries. The simplest robust option is to begin with score-level fusion before attempting full feature-level multiscale concatenation.

\subsection{Statistical Testing}

All reported metrics are computed across multiple runs or temporal folds whenever the dataset structure permits. We report means together with confidence intervals, preferably via nonparametric bootstrap at the window level or fold level depending on the evaluation setting.

For key pairwise comparisons, especially entropy-only versus full macrostate detectors, we perform paired significance testing on per-split performance values. Suitable tests include:
\begin{itemize}[leftmargin=2em]
    \item paired bootstrap confidence intervals for metric differences,
    \item paired permutation tests,
    \item Wilcoxon signed-rank tests on split-level scores when appropriate.
\end{itemize}

The primary inferential target is not broad null-hypothesis theater, but a disciplined answer to the practical question: are the gains of the full macrostate framework over entropy-only and standard aggregate baselines consistent across datasets, scales, and split configurations?

Taken together, this experimental design gives an exact and executable plan for evaluating the macrostate framework. It is sufficiently constrained to support reproducibility, sufficiently modular to support ablation and robustness studies, and sufficiently aligned with the theoretical claims of the paper to make the eventual empirical results interpretable. 

\section{Results}

This section reports the first empirical results of the proposed finite macrostate framework on benchmark network telemetry. The clearest current evidence comes from the cleaned UNSW-NB15 experiments, with the strongest and most interpretable results appearing at the $60\,$s scale. Bounded first-pass experiments on CIC-IDS2017 at $30\,$s and $60\,$s reproduce the same directional pattern, namely that richer macrostate families improve over entropy-only baselines while the most useful macrostate composition can vary with temporal scale. Accordingly, UNSW-NB15 serves as the primary empirical anchor of this paper, while CIC-IDS2017 provides bounded replication rather than a fully symmetric second benchmark campaign.

We emphasize three points at the outset. First, the present paper is about validating the finite macrostate representation itself, not yet about claiming that every layer of the broader cyber-dynamics program is fully mature. Second, the UNSW-NB15 results should be read as the primary empirical anchor because they support cleaned split construction, scale analysis, and multiscale comparisons. Third, the CIC-IDS2017 results should be read as bounded replication evidence: they are sufficient to test whether the macrostate advantage survives on a second benchmark, but they are not yet intended as a fully symmetric all-scale second campaign.

\begin{table}[t]
\centering
\caption{Primary results at informative scales. Entries report AUPRC.}
\label{tab:primary_results_compact}
\renewcommand{\arraystretch}{1.08}
\setlength{\tabcolsep}{3.5pt}
\footnotesize
\begin{tabular}{llll}
\hline
Dataset & Scale & Best baseline & Best macrostate \\
\hline
UNSW & 30s & OCSVM, 0.8492 & Full, 0.8569 \\
UNSW & 60s & OCSVM, 0.9631 & Full, 0.9730 \\
CIC & 30s & EV-LR, 0.7387 & A+H+V, 0.7797 \\
CIC & 60s & EV-LR, 0.7672 & Full, 0.7917 \\
\hline
\end{tabular}
\end{table}

In Table~\ref{tab:primary_results_compact}, OCSVM denotes One-Class SVM, EV-LR denotes entropy-vector logistic regression, and Full denotes the full macrostate $A{+}H{+}S{+}V{+}P{+}D$.

\subsection{UNSW-NB15: Cleaned Primary Results}

The cleaned UNSW-NB15 experiments provide the strongest evidence in the paper. After repairing the evaluation split to avoid label-degenerate large-window test regions, all four scales became formally viable, but the most informative scales remained $30\,$s and $60\,$s. The larger $300\,$s and $900\,$s settings are still more imbalanced and are therefore treated as secondary evidence rather than as the primary basis for interpretation.

At the baseline level, entropy-only detection remains meaningful but is not dominant. At both $30\,$s and $60\,$s, strong baseline performance is achievable, especially with One-Class SVM, but once richer macrostate families are introduced the entropy-only representation is no longer best. The most important UNSW result appears at the $60\,$s scale, where the full macrostate achieves the strongest clean performance, with AUPRC approximately $0.9730$, exceeding the best entropy-only baseline. At $30\,$s, the same directional pattern holds: the full macrostate again outperforms entropy-only formulations, though with a smaller margin than at $60\,$s.

These results support the central empirical claim of this paper. Entropy remains useful, but entropy-only descriptions are not sufficient as complete behavioral summaries of cyber windows. A richer macrostate that combines activity, disorder, structure, volatility, persistence, and deviation provides a stronger representation of anomaly structure than scalar disorder alone.

\subsection{Macrostate Ablation Results on UNSW}
An ablation study systematically removes or adds parts of a model or representation to measure how much each component contributes to performance. In this paper, we use ablation to show that the macrostate gains are not accidental: entropy helps, but activity, structure, volatility, persistence, and deviation each add interpretable value, which helps justify why a richer behavioral state-space outperforms entropy-only formulations.

The ablation ladder in this study uses five feature families: $H$, $A{+}H$, $A{+}H{+}V$, $A{+}H{+}S{+}V$, and the full macrostate $A{+}H{+}S{+}V{+}P{+}D$.

\begin{table}[t]
\centering
\caption{Best ablation stage by dataset and scale.}
\label{tab:ablation_compact}
\renewcommand{\arraystretch}{1.08}
\setlength{\tabcolsep}{4pt}
\footnotesize
\begin{tabular}{llll}
\hline
Dataset & Scale & Best stage & AUPRC \\
\hline
UNSW & 30s & Full & 0.8569 \\
UNSW & 60s & Full & 0.9730 \\
CIC & 30s & A+H+V & 0.7797 \\
CIC & 60s & Full & 0.7917 \\
\hline
\end{tabular}
\end{table}

Several consistent patterns emerge on UNSW. First, moving from $H$ alone to $A{+}H$ provides a clear gain, confirming that disorder without workload context is incomplete. Second, the addition of structural variables yields a further improvement, and this improvement is consistent across the informative scales. Third, persistence and deviation variables add a smaller but positive final lift over the already-strong $A{+}H{+}S{+}V$ configuration. In other words, the empirical ladder mirrors the theoretical argument of the paper: entropy contributes useful information, but the strongest representation arises when disorder is embedded in a broader behavioral state-space.

This ablation pattern is one of the most important empirical outputs of this paper. It shows that the macrostate is not simply a larger feature bag whose gains are opaque. Instead, the gains align with an interpretable progression: entropy helps, activity contextualizes it, structure strengthens it, and persistence/deviation sharpen the final discrimination.

\begin{figure}[t]
    \centering
    \includegraphics[width=\linewidth]{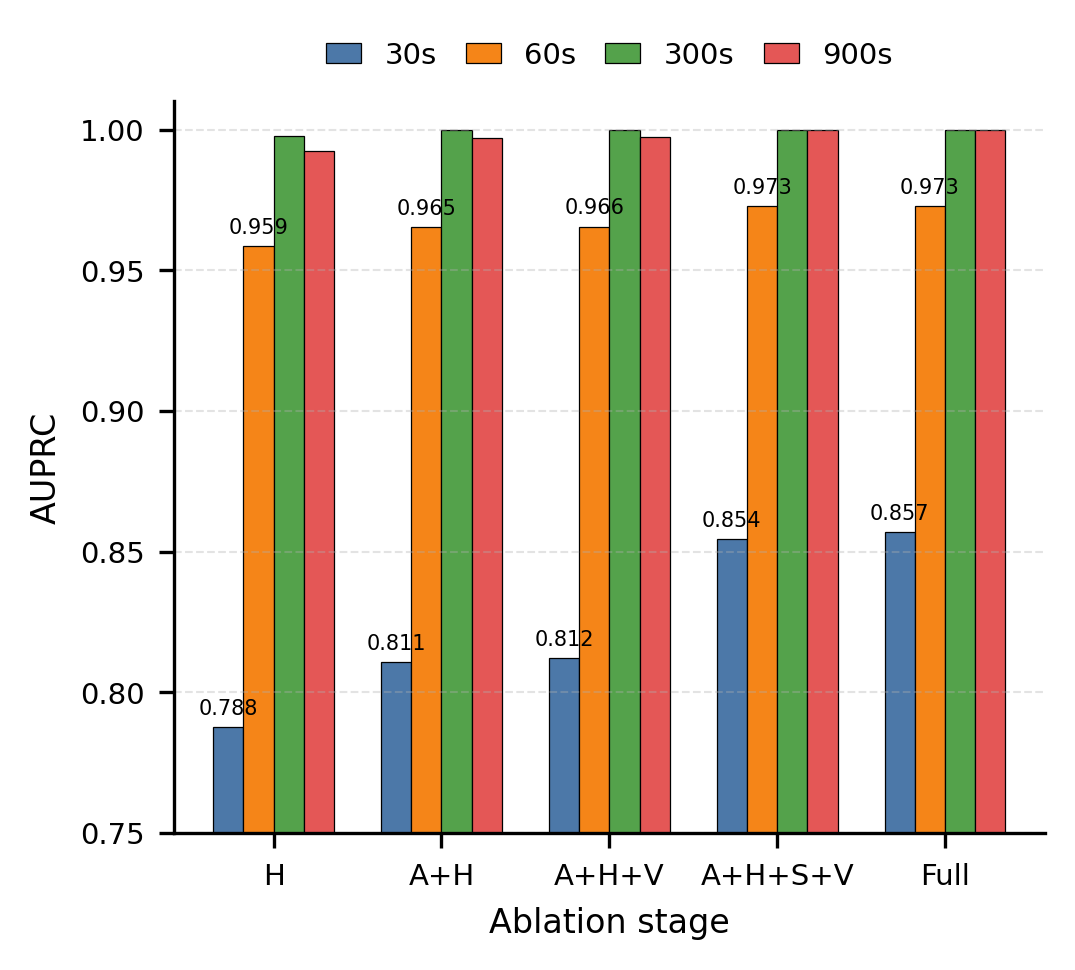}
    \caption{UNSW-NB15 ablation results across window scales, reported in AUPRC. The x-axis corresponds to the ablation ladder $H$, $A{+}H$, $A{+}H{+}V$, $A{+}H{+}S{+}V$, and Full; the legend denotes the four window scales. The strongest clean evidence appears at $60\,$s, where the full macrostate is best.}
    \label{fig:unsw_ablation_by_scale}
\end{figure}

Figure~\ref{fig:unsw_ablation_by_scale} makes the ablation pattern visually explicit. At the informative scales, performance improves as the representation moves from entropy-only toward richer behavioral macrostates. The largest and clearest gain appears when structural variables are introduced, while persistence and deviation provide a smaller but still positive final lift. The strongest clean pattern appears at $60\,$s, where the full macrostate is best. This is important because it shows that the macrostate gains are not arbitrary; they arise through an interpretable progression from disorder alone toward a broader behavioral state description. At the informative scales, performance improves as the representation moves from entropy-only toward richer behavioral macrostates. The largest and clearest gain appears when structural variables are introduced, while persistence and deviation provide a smaller but still positive final lift. This is important because it shows that the macrostate gains are not arbitrary; they arise through an interpretable progression from disorder alone toward a broader behavioral state description.

\subsection{State Versus Transition Scoring on UNSW}

The transition-aware experiments yield a more cautious result than the state-based ones.

\begin{table}[t]
\centering
\caption{Transition-scoring summary.}
\label{tab:transition_compact}
\renewcommand{\arraystretch}{1.08}
\setlength{\tabcolsep}{4pt}
\footnotesize
\begin{tabular}{llll}
\hline
Dataset & Scale & Best mode & Note \\
\hline
UNSW & 30s & State-only & Transition weaker \\
UNSW & 60s & State-only & Fusion no gain \\
CIC & 30s & --- & Deferred \\
CIC & 60s & --- & Deferred \\
\hline
\end{tabular}
\end{table}

Across the informative UNSW scales, state-only scoring remains stronger than transition-only scoring, and simple fused state-transition scoring does not improve over the state-only formulation. This means that this paper should not claim a demonstrated transition benefit. Instead, the correct interpretation is narrower: the finite macrostate representation itself is already useful, while the particular transition models used in this first paper are too simple to unlock the full dynamical benefit of the framework.

This is an informative negative result rather than a failure of the overall program. It suggests that the main value of this paper lies in establishing a reusable behavioral state-space, while stronger exploitation of trajectories, transition concentration, regime escape, and recovery belongs more naturally to later papers.

\subsection{Scale Sensitivity on UNSW}

The multiscale experiments support the claim that anomaly visibility depends materially on temporal resolution. The $60\,$s scale provides the strongest clean evidence overall on UNSW, while $30\,$s is also informative and directionally consistent. By contrast, the larger $300\,$s and $900\,$s scales remain more imbalanced even after split repair, which makes their headline ranking metrics less trustworthy as primary evidence. The correct interpretation is therefore not that larger windows are universally better, but that temporal scale changes what is visible and how reliably it can be estimated.

\begin{figure}[t]
    \centering
    \includegraphics[width=\linewidth]{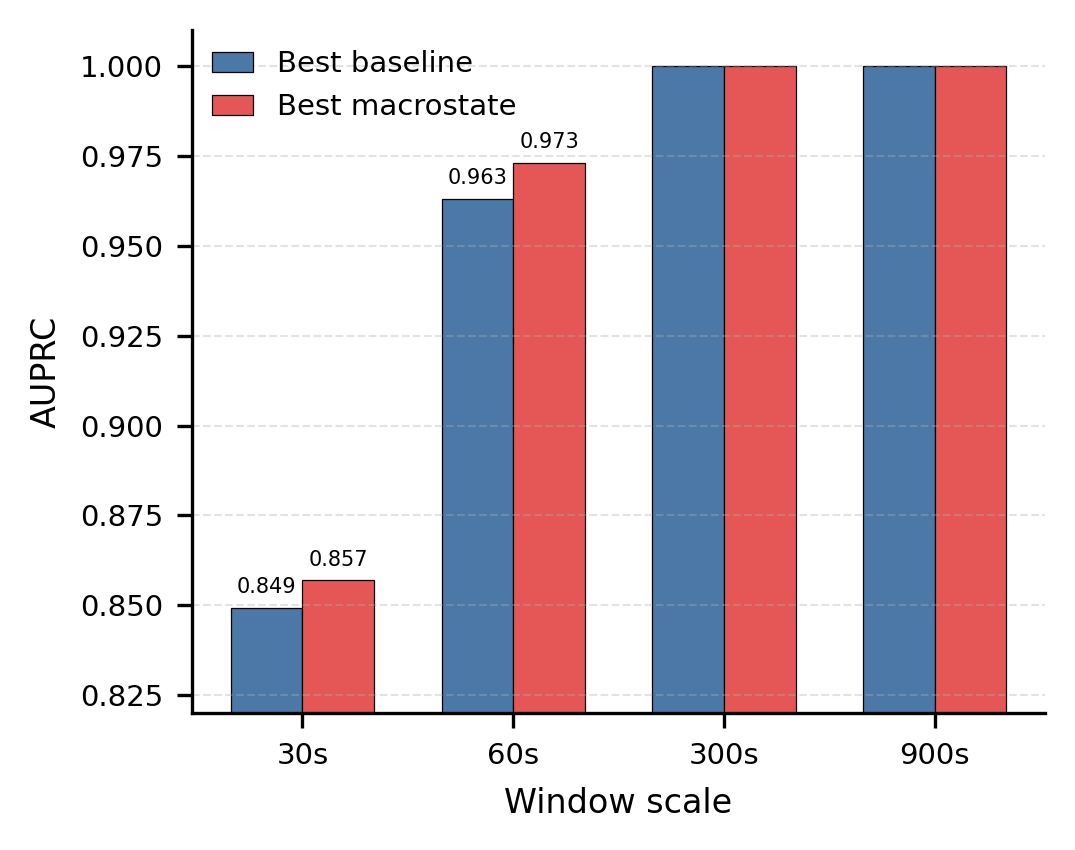}
    \caption{Scale sensitivity on UNSW-NB15 under the cleaned split policy. The figure compares the best baseline and the best macrostate model at each window scale using AUPRC. The 60s scale provides the strongest clean evidence, while larger scales behave differently and remain more class-imbalanced even after split repair.}
    \label{fig:unsw_scale_sensitivity}
\end{figure}

Figure~\ref{fig:unsw_scale_sensitivity} shows that scale is not a minor tuning parameter but part of the representational problem itself. Accordingly, the paper's primary empirical interpretation is anchored on the $30\,$s and $60\,$s scales rather than on the larger-window settings. The $60\,$s scale yields the strongest clean result, while $30\,$s remains informative and directionally consistent. The larger $300\,$s and $900\,$s scales are shown for completeness and as secondary evidence, but their greater imbalance means that they should not carry the primary interpretive weight of the paper. This figure therefore supports the claim that anomaly visibility depends materially on temporal resolution. The $60\,$s scale yields the strongest clean result, while $30\,$s remains informative and directionally consistent. The larger scales are still useful as secondary evidence, but their greater imbalance means that they should not carry the primary interpretive weight of the paper. This figure therefore supports the claim that anomaly visibility depends materially on temporal resolution.

A first multiscale fusion experiment was also performed. Simple score-level fusion improves over the baseline regime and is therefore directionally useful, but it does not exceed the best single-scale full macrostate result at $60\,$s. This means that multiscale analysis is justified as a representational necessity, but multiscale fusion is not yet empirically superior in its current simple form.

\begin{figure}[t]
    \centering
    \includegraphics[width=\linewidth]{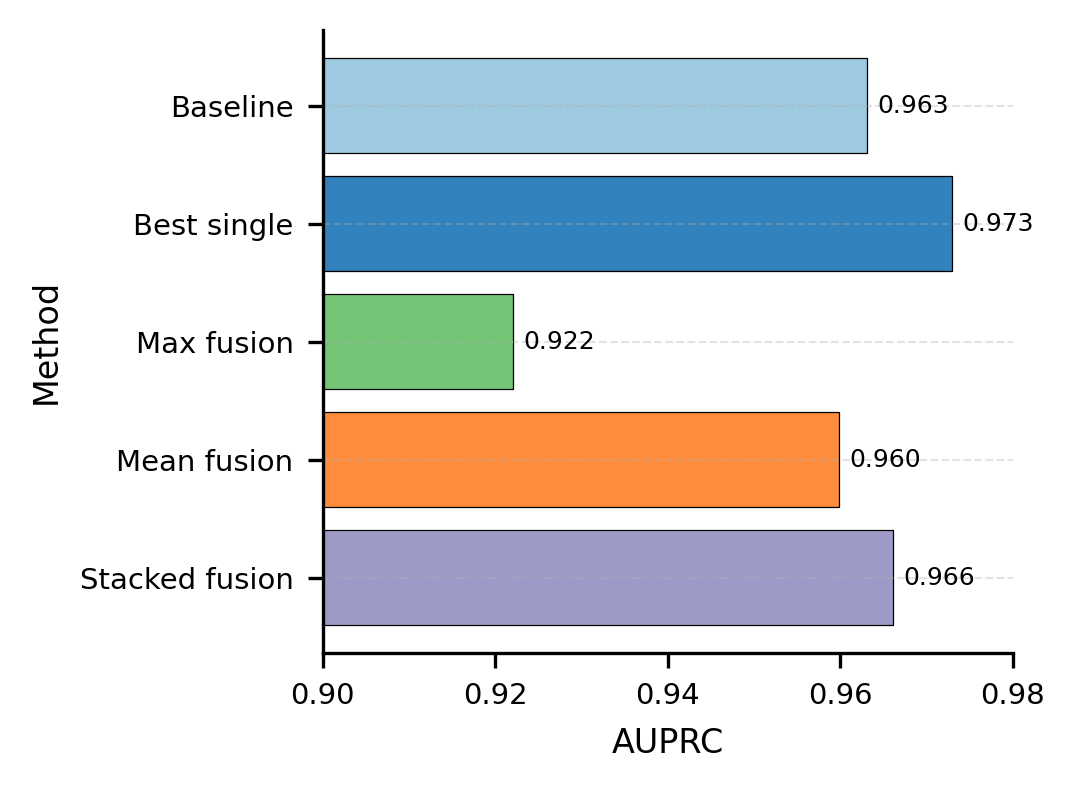}
    \caption{Multiscale fusion results on UNSW-NB15. Simple score-level and stacked fusion improve over the baseline regime, but they do not surpass the best single-scale macrostate model. This indicates that multiscale aggregation is useful, but not yet stronger than the best informative single-scale representation in the current setup.}
    \label{fig:unsw_multiscale_fusion}
\end{figure}

Figure~\ref{fig:unsw_multiscale_fusion} clarifies the role of multiscale fusion in the current paper. The multiscale combinations improve over weaker comparison points and over the broader baseline regime, which is encouraging, but they do not surpass the best single-scale full macrostate at $60\,$s. The correct conclusion is therefore that multiscale analysis is representationally justified, while multiscale fusion remains preliminary and is not yet empirically superior to the best informative single-scale representation. The multiscale combinations improve over the baseline regime, which is encouraging, but they do not surpass the best single-scale full macrostate at $60\,$s. The correct conclusion is therefore that multiscale analysis is representationally justified, while effective multiscale fusion remains an open empirical question.

\subsection{Macrostate Geometry on UNSW}

Reduced-dimensional geometry plots further support the behavioral-state interpretation of the framework. PCA visualizations of the macrostate space show visible benign-versus-malicious separation, especially at the informative scales. At $60\,$s, the first two principal components already capture nontrivial separation between benign and anomalous windows, consistent with the view that the macrostate is not merely useful for scalar classification but also organizes behavior into interpretable geometric structure.

\begin{figure}[t]
    \centering
    \includegraphics[width=\linewidth]{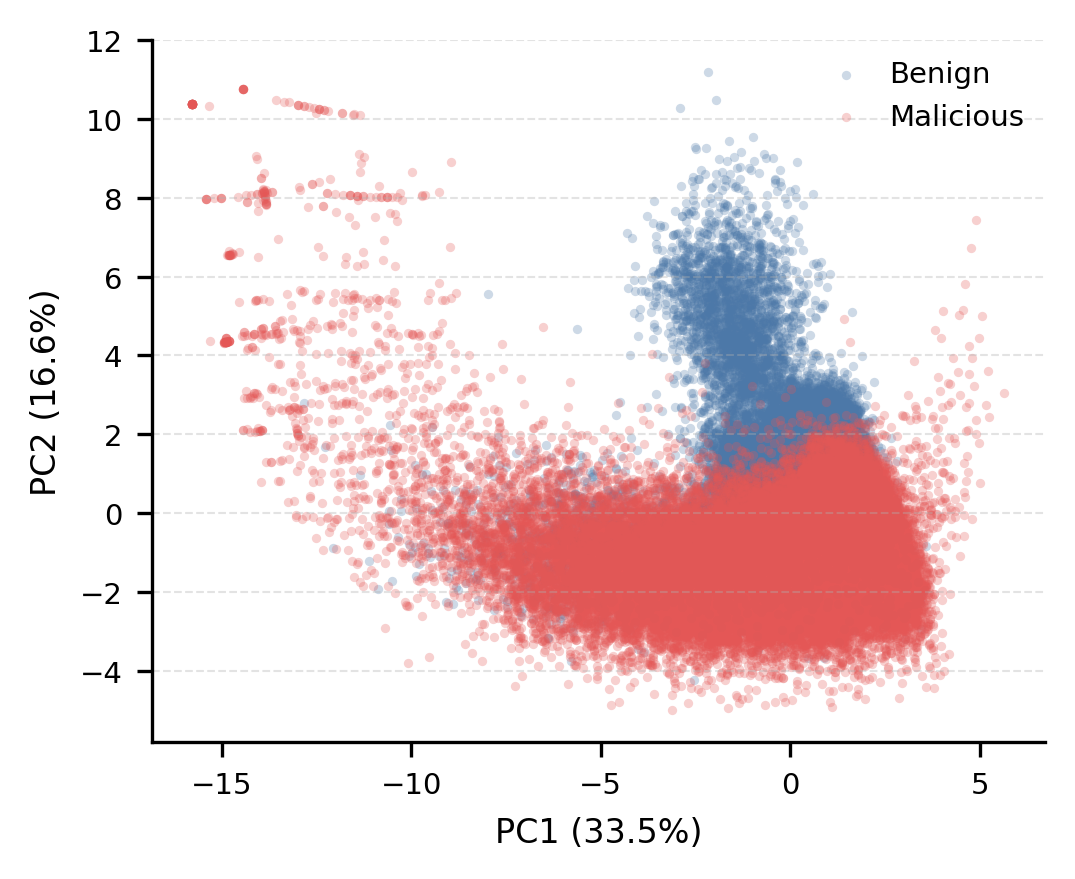}
    \caption{Two-dimensional PCA projection of the 60s UNSW macrostate representation. Benign and malicious windows exhibit partial but visible separation, supporting the claim that the learned macrostate induces a meaningful state-space rather than collapsing the classes into a single overlapping distribution.}
    \label{fig:unsw_pca_60s}
\end{figure}

Figure~\ref{fig:unsw_pca_60s} provides a geometric view of the macrostate space at the strongest clean scale. The benign and malicious windows are not perfectly separable in two dimensions, but they do occupy visibly different regions of the projected state-space. The projection is used interpretively rather than as evidence of linear separability. This matters because the macrostate is intended not only to improve scalar detection performance, but also to support interpretable behavioral-state analysis of cyber behavior. The benign and malicious windows are not perfectly separable in two dimensions, but they do occupy visibly different regions of the projected state-space. This matters because the macrostate is intended not only to improve scalar detection performance, but also to support interpretable behavioral-state reasoning about cyber behavior.

\subsection{CIC-IDS2017: Bounded Replication at 30s and 60s}

The CIC-IDS2017 experiments were designed as bounded replication runs rather than as a full second all-scale campaign. To keep runtime under control, we implemented fast deviation patches and ran first-pass experiments at $60\,$s and then $30\,$s only. Even under this bounded setup, the directional macrostate result replicated.

At $60\,$s, the best entropy-only baseline is entropy-vector logistic regression, with AUPRC approximately $0.7672$. The best macrostate stage is the full macrostate $A{+}H{+}S{+}V{+}P{+}D$, with AUPRC approximately $0.7917$, providing a modest but real gain. At $30\,$s, the same directional effect remains: entropy-vector logistic regression is again the strongest baseline, while the best macrostate stage outperforms it. Importantly, however, the strongest macrostate at $30\,$s is not the full state but the intermediate $A{+}H{+}V$ configuration, with AUPRC approximately $0.7797$. Thus, CIC-IDS2017 already shows that the most useful macrostate family can differ by scale even when the overall macrostate advantage persists. The best macrostate stage is the full macrostate, with AUPRC approximately $0.7917$, providing a modest but real gain. At $30\,$s, the same directional effect remains: entropy-vector logistic regression is again the strongest baseline, while the best macrostate stage outperforms it. Importantly, however, the strongest macrostate at $30\,$s is not the full state but the intermediate $A{+}H{+}V$ configuration, with AUPRC approximately $0.7797$.

This difference between $30\,$s and $60\,$s on CIC is important. It suggests that scale sensitivity in this paper is not merely about whether performance changes with window size, but also about which observable families are most useful at different scales. On CIC, the richest macrostate is strongest at $60\,$s, while a leaner activity--disorder--volatility representation is strongest at $30\,$s.

\begin{figure}[t]
    \centering
    \includegraphics[width=\linewidth]{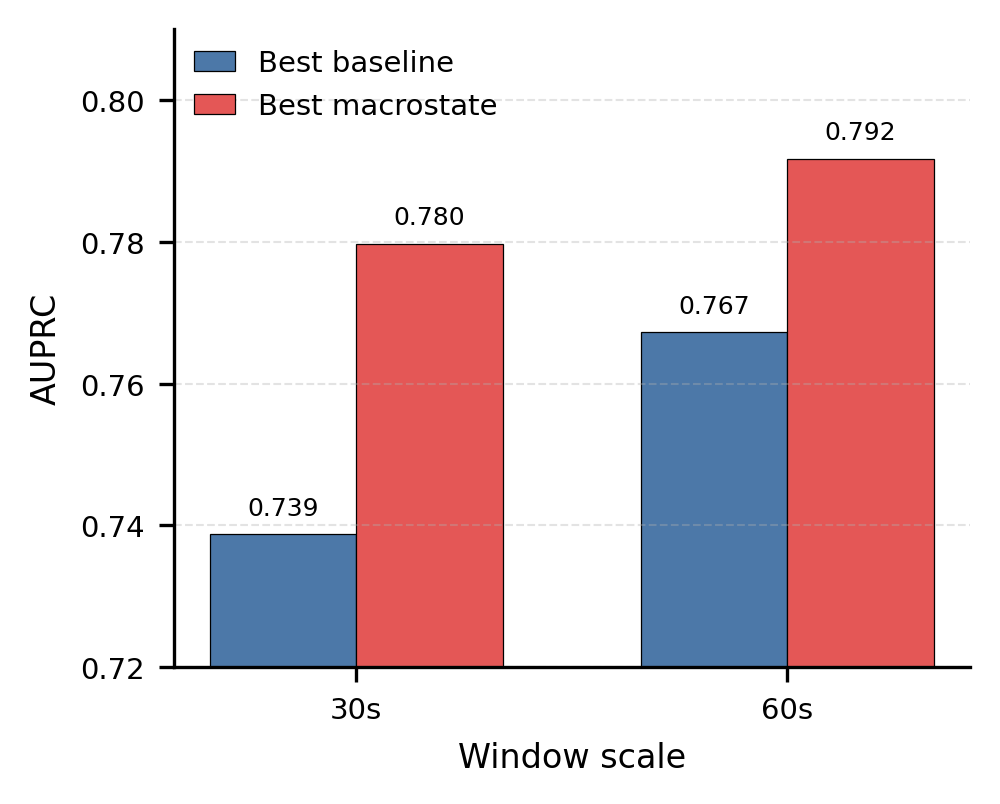}
    \caption{First-pass CIC-IDS2017 comparison at 30s and 60s using AUPRC. At both scales, the best macrostate model outperforms the best entropy-based baseline, while 60s is stronger overall. The differing best ablation stages across the two scales also indicate early scale sensitivity on CIC.}
    \label{fig:cic_30s_vs_60s}
\end{figure}

Figure~\ref{fig:cic_30s_vs_60s} shows that the CIC-IDS2017 results follow the same broad direction as UNSW while also revealing a scale-dependent difference in the strongest macrostate composition. At both scales, the best macrostate outperforms the best entropy-only baseline. However, the strongest macrostate at $30\,$s is the intermediate $A{+}H{+}V$ configuration, whereas at $60\,$s the full macrostate is best. This is useful evidence that the value of different observable families can vary with temporal scale.

\subsection{Cross-Dataset Interpretation}

Taken together, the UNSW and CIC results support a coherent empirical narrative. On both datasets, entropy-only methods remain meaningful but are not best once richer macrostate families are introduced. On both datasets, richer behavioral representations improve anomaly discrimination over entropy-only baselines. On both datasets, scale matters. And on both datasets, the most useful explanatory story is not simply one of higher or lower entropy, but one of macrostate composition: activity contextualizes disorder, structure helps identify organized reconfiguration, and the best observable family combination can change with temporal resolution.

At the same time, the results also identify clear boundaries. Transition-aware scoring is not yet a demonstrated advantage. Multiscale fusion is useful but not yet superior to the best single-scale macrostate model. And the strongest clean empirical anchor remains UNSW at $60\,$s. These limits are not weaknesses of the whole approach; rather, they define the correct boundary of this paper.

\subsection{Synthesis}

The empirical results therefore support the central claim of the paper in a disciplined but non-exaggerated form. Entropy remains important, but entropy-only formulations are insufficient as complete behavioral descriptions. Finite macrostates built from activity, disorder, structure, volatility, persistence, and deviation provide a more informative and more interpretable representation of network behavior. The gains are strongest at the $60\,$s scale on UNSW and replicate directionally at $30\,$s and $60\,$s on CIC-IDS2017. These results are therefore best read as the first empirical validation of a broader cyber-dynamics program: anomaly detection in cyber systems is not only a problem of scalar irregularity, but also a problem of behavioral state-space construction and analysis.

\section{Discussion}

The empirical results support the main conceptual claim of the paper in a careful but meaningful way. At the most practical level, the proposed macrostate representation improves anomaly discrimination over entropy-only baselines on the primary benchmark and reproduces the same directional advantage on a second benchmark. At the conceptual level, the experiments show that cyber anomaly detection is better understood as a problem of behavioral state description than as a problem of scalar disorder alone. This section interprets those findings, clarifies their limits, and situates this paper within the broader cyber dynamics program.

\subsection{Why Scalar Entropy is Not Enough}

The first empirical conclusion is that scalar entropy remains useful but is not sufficient. On both UNSW-NB15 and CIC-IDS2017, entropy-only methods provide nontrivial baseline performance, which is entirely consistent with the long-standing information-theoretic anomaly literature. However, once richer macrostate families are introduced, entropy-only formulations are no longer the strongest representation. This is exactly the empirical pattern one would expect if entropy captures only one dimension of cyber behavior while leaving structural and dynamical organization unresolved.

The ablation ladders help sharpen this conclusion. On UNSW, Figure \ref{fig:unsw_ablation_by_scale}, the move from $H$ alone to $A{+}H$, then to $A{+}H{+}S{+}V$, and finally to the full macrostate shows that the gains are not random. They follow a structured progression in which entropy becomes more useful once embedded in a broader behavioral context. On CIC, the same directional improvement appears, but with a scale-dependent difference in the strongest ablation stage. This reinforces the paper's main point: entropy is one macro-observable family, not the whole macrostate.

\subsection{Benign Drift, Structured Reorganization, and Scale}

A second key lesson is that temporal scale matters not only for performance magnitude, but for the very composition of the most informative behavioral representation. On UNSW, as shown in Fig.~\ref{fig:unsw_ablation_by_scale}, the move from $H$ alone to $A{+}H$, then to $A{+}H{+}S{+}V$, and finally to the full macrostate shows that the gains are not random. On CIC, the $60\,$s scale again supports the full macrostate, whereas the $30\,$s scale favors the leaner $A{+}H{+}V$ configuration. This is an important result because it shows that the macrostate framework is not merely learning a fixed feature template. Rather, the relative importance of activity, structure, volatility, persistence, and deviation can shift with the temporal scale at which behavior is observed.

This finding is closely aligned with the multiscale arguments developed in the theory section. Some anomalies become visible as short-horizon irregularity, others as intermediate-horizon structural reorganization, and still others only as longer regime drift. This work does not yet establish an optimal multiscale fusion strategy, but it does provide empirical support for the stronger and more basic claim that anomaly visibility is scale-dependent.

\subsection{What the Transition Results Mean}

The transition-aware results should be interpreted conservatively. The current experiments do not show a strong gain from transition-only or fused state-transition scoring. On the informative scales, state-only scoring remains best. This means that this paper should not claim a demonstrated transition benefit. Instead, the correct conclusion is narrower: the finite macrostate representation itself is already useful, while the specific transition models used in this first paper are too simple to unlock the full dynamical benefit of the framework.

This negative result is still informative. It suggests that the value of this paper lies primarily in establishing a reusable behavioral state-space, while the stronger exploitation of trajectories, transition concentration, regime escape, and recovery will belong more naturally in future work. Put differently, the state-space is working before the transition layer is mature. That is a positive architectural result even if it is not yet a positive transition result.

\subsection{Primary Strength of the Paper}

The strongest empirical contribution of this paper is therefore not that every part of the cyber dynamics program is already fully realized. It is that a finite, interpretable macrostate representation can outperform entropy-only baselines and do so in a way that is behaviorally meaningful. The gains are strongest when the representation incorporates not only disorder but also activity and structure, and they remain visible across two datasets. This is enough to justify the macrostate program as a serious and empirically grounded direction rather than merely a conceptual reframing.

\subsection{Limitations}

Several limitations remain and should be stated clearly.

First, the cleanest evidence currently comes from UNSW-NB15, especially at the $60\,$s scale. The CIC-IDS2017 results are encouraging and directionally consistent, but they are still bounded first-pass experiments rather than a full second benchmark campaign.

Second, the larger-window UNSW settings remain more imbalanced than the smaller scales even after split repair. They are usable as secondary evidence, but they should not bear the full interpretive weight of the paper.

Third, the CIC experiments were intentionally optimized for runtime, which required a fast approximation for the deviation family and a bounded two-scale evaluation rather than a full all-scale sweep. These are reasonable engineering choices for this paper, but they mean that the cross-dataset comparison is not yet perfectly symmetric.

Fourth, transition-aware scoring is not yet mature enough to support a positive empirical claim. This should be treated as a deferred strength of the broader program rather than as a demonstrated result of this paper.

Fifth, this paper remains network-centered. Although the framework is defined in a CSTS-compatible way and is intended to extend naturally to heterogeneous telemetry, the present experiments still concern network behavior rather than full endpoint--identity--provenance fusion.

\subsection{Future work}

These limitations also define the next layer of the program. Future work can take the state-space established here and focus on the genuinely dynamical questions that this paper only begins to touch: stability, regime persistence, recovery, metastability, and escape under perturbation. We can then generalize the framework from network telemetry to heterogeneous cybersecurity data, using CSTS as the canonical substrate on which richer cross-modal macrostate families are defined.

In that sense, this paper shows that finite macrostates are empirically useful, interpretable, and portable enough to justify further investment. The fact that transition scoring remains weak and multiscale fusion remains only partially successful should be seen less as failure than as a clean definition of the next frontier.

\section{Conclusion}

This paper introduced the first empirical step of a broader cyber dynamics program: a finite macrostate framework for behavioral anomaly detection in network telemetry. The central idea is that raw cyber events should be coarse-grained into a finite-dimensional macrostate whose coordinates capture activity, distributional disorder, structural organization, temporal volatility, persistence, and deviation from benign behavior. Once this representation is defined, anomaly detection can be posed not only as a question of unusual observations, but as a question of behavioral state, regime structure, and state-space evolution.

The empirical results support this shift in a disciplined way. On UNSW-NB15, the clearest evidence appears at the $60\,$s scale, where the full macrostate outperforms entropy-only baselines and structural variables contribute consistent gains beyond activity and disorder alone. On CIC-IDS2017, bounded first-pass experiments at $30\,$s and $60\,$s reproduce the same directional macrostate advantage over entropy-only baselines, while also showing that the most useful macrostate composition can vary with scale. Taken together, these results support three core claims of this paper: entropy-only summaries are insufficient, richer finite macrostates improve anomaly discrimination, and temporal scale materially affects what kinds of behavior are most visible.

At the same time, the paper also identifies clear limits. Transition-aware scoring is not yet a demonstrated advantage in its current simple form, and multiscale fusion, while useful, does not yet exceed the best single-scale macrostate model. These are not failures of the overall framework. Rather, they clarify the boundary of what this paper has established: the finite macrostate representation itself is empirically meaningful and interpretably structured, while the richer dynamical layers of the program remain open for the next papers.

More broadly, the results suggest that cyber anomaly detection should be treated less as a narrow thresholding problem over scalar irregularity and more as a problem of behavioral state-space construction over canonicalized telemetry. In that sense, this paper does not claim a completed cyber thermodynamics. It establishes a reusable formal and empirical core: microstates, windowed ensembles, finite macrostates, behavioral-state interpretation, and the first evidence that such a representation improves over entropy-only baselines across benchmark settings. That is the proper foundation on which later work in stability, recovery, heterogeneous telemetry, and broader behavioral cyber AI can be built.
\appendix

\section{Symbolic Dictionary}
\begin{table}[H]
\centering
\footnotesize
\renewcommand{\arraystretch}{1.08}
\setlength{\tabcolsep}{3pt}
\caption{Symbolic dictionary for the macrostate framework.}
\label{tab:symbolic_dictionary}
\begin{tabularx}{\columnwidth}{>{\raggedright\arraybackslash}p{0.25\columnwidth} >{\raggedright\arraybackslash}X}
\toprule
Symbol & Meaning \\
\midrule
$e$ & network microstate / event \\
$\tau(e)$ & timestamp of event $e$ \\
$W_t$ & time window $[t,t+\Delta)$ \\
$\cE_t$ & micro-ensemble induced by $W_t$ \\
$\PhiMacro$ & macrostate map \\
$X_t$ & macrostate at window $W_t$ \\
$A_t$ & activity/load coordinates \\
$H_t$ & disorder/entropy coordinates \\
$S_t$ & structural-order coordinates \\
$V_t$ & volatility coordinates \\
$P_t$ & persistence/memory coordinates \\
$C_t$ & coupling/influence coordinates \\
$D_t$ & deviation-to-baseline coordinates \\
$\cB$ & benign regime set \\
$\cA$ & anomalous region of macrostate space \\
$\cR$ & behavioral regime \\
$T$ & nominal transition operator \\
$\Delta$ & window length \\
$\delta$ & window stride \\
$\mathcal{S}_t^{\mathrm{state}}$ & state anomaly score \\
$\mathcal{S}_t^{\mathrm{trans}}$ & transition anomaly score \\
$\mathcal{S}_t^{\mathrm{fused}}$ & fused anomaly score \\
$\tau_{\mathrm{rec}}(t)$ & recovery time after perturbation at time $t$ \\
\bottomrule
\end{tabularx}
\end{table}

%
\bibliographystyle{IEEEtran}
\bibliography{ref}


\end{document}